\setlist[enumerate,1]{label=(\arabic*),font=\normalfont,align=left,leftmargin=0pt,labelindent=0pt,listparindent=\parindent,labelwidth=0pt,itemindent=!,topsep=2pt,parsep=0pt,itemsep=2pt,start=1}
\setlist[enumerate,2]{label=(\alph*),font=\normalfont,labelindent=*,leftmargin=*,start=1,left=5pt,ref=(\arabic{enumi}\alph*)}
\setlist[itemize]{labelindent=*,leftmargin=*}
\setlist[description]{labelindent=*,leftmargin=*,itemindent=-1 em}
\newcommand{\oldsmnote}[2][]{}
\newcommand{\oldtwnote}[2][]{}
	\def\pagestyle#1{}%
	\def\thispagestyle#1{}%
	\def\labelmarginpar#1{}%
\newcommand{\takeout}[1]{\empty}
\newcommand{\monoto}{\rightarrowtail}
\newcommand{\epito}{\twoheadrightarrow}
\newcommand{\xra}[1]{\xrightarrow{~#1~}}
\newcommand{\pow}{\ensuremath{\mathscr{P}}}
\newcommand{\eps}{\varepsilon}
\newcommand{\USX}{U_X}%
\newcommand{\UX}{U_X}%
\newcommand{\UFA}{U_{FA}}%
\newcommand{\N}{\ensuremath{\mathbb{N}}}
\newcommand{\C}{\ensuremath{\mathscr{C}}}
\newcommand{\D}{\ensuremath{\mathscr{D}}}
\newcommand{\E}{\ensuremath{\mathscr{E}}}
\newcommand{\obSet}{\ensuremath{\mathscr{S}}}
\newcommand{\Set}{\ensuremath{\mathsf{Set}}}
\newcommand{\Coalg}{\ensuremath{{F\text{-}\mathsf{Coalg}}}}
\newcommand{\Alg}{\ensuremath{{F\text{-}\mathsf{Alg}}}}
\newcommand{\id}{\ensuremath{\mathsf{id}}}
\newcommand{\gen}[1]{\ensuremath{\langle{#1}\rangle}}
\newcommand{\inj}{\ensuremath{\mathsf{inj}}}
\newcommand{\inr}{\ensuremath{\mathsf{inr}}}
\newcommand{\inl}{\ensuremath{\mathsf{inl}}}
\DeclareMathOperator{\colim}{\ensuremath{\mathsf{colim}}}
\newcommand{\Fil}{\ensuremath{\mathsf{Fil}}}
\newcommand{\fp}{\ensuremath{\mathsf{p}}}
\newcommand{\finrec}{%
  \smash{\ensuremath{
    \overset{
    \scriptstyle\mathsf{fin}
    }{
    \scriptstyle\mathsf{rec}
    }
  }%
  }%
  }
\newcommand{\itemref}[2]{\autoref{#1}.\ref{#2}}
\newcommand{\textqt}[1]{`#1'}
\newcommand{\leafbullet}{\text{\color{black!80}\ensuremath{\bullet}}}
\colorlet{innernodecolor}{green!40!black}
\newcommand{\set}[2][]{%
  \ifthenelse{\equal{#2}{}}{%
    \ensuremath{{#1\emptyset}}%
  }{%
    \ensuremath{{#1\{#2#1\}}}%
  }%
}
\tikzset{
  bintree/.style={
    level distance=5mm,
    sibling distance=5mm,
    every node/.append style={
      inner sep=0pt,
      outer sep=0pt,
      minimum size=0pt,
    },
    edge from parent/.style={
      draw=innernodecolor,
      -,
      child anchor=center, %
      shorten >= 5pt,
    },
    subtree/.style={
      isosceles triangle,
      anchor=north,
      outer sep=0pt,
      inner sep=0pt,
      draw=black!60,
      shape border rotate=90,
      isosceles triangle stretches=true,
      inner sep=0pt, %
      minimum width=10pt,
      minimum height=10pt,
    },
  }
}
\newcommand{\cherryfunctor}[2][\set{\leafbullet}]{%
  #1 + #2 \mathcolor{innernodecolor}{\times} #2
}
\tikzstyle{reset attributes}=[
\newcommand{\descto}[3][]{\arrow[phantom]{#2}[#1]{\text{\footnotesize{}\begin{tabular}{c}#3\end{tabular}}}}
\tikzset{shiftarr/.style={
    rounded corners,%
    to path={--([#1]\tikztostart.center)
      -- ([#1]\tikztotarget.center) \tikztonodes
      -- (\tikztotarget)},
  }}
\tikzset{shiftarr/.style={
    rounded corners,%
    to path={--([#1]\tikztostart.center)
      -- ([#1]\tikztotarget.center) \tikztonodes
      -- (\tikztotarget)},
  }}
\newcommand{\nicehref}[2]{%
\begin{tikzpicture}[baseline=(txt.base)]
\node[text=blue!80!white!40!black,inner sep=0pt,text depth=1.1pt] (txt)
  {\href{#1}{{#2}{\hspace*{1pt}\raisebox{.0pt}{\color{blue!30!white}\tiny\faExternalLink}}}};
\begin{scope}[on background layer]
\draw[color=blue!25!white] (txt.south west) -- (txt.south east);
\end{scope}
\end{tikzpicture}%
}
\newcommand{\sourceRepoURL}{%
  https://git8.cs.fau.de/software/initial-algebras-unchained%
}
\newcommand{\onlineHtmlURL}{%
  https://arxiv.org/src/2405.09504/anc/%
}
\newcommand{\softwareHeritageURL}{https://archive.softwareheritage.org/browse/origin/directory/?origin_url=https://git8.cs.fau.de/software/initial-algebras-unchained}
\newsavebox{\logoagdabox}
\sbox{\logoagdabox}{%
  \raisebox{-2pt}{\includegraphics[height=1em]{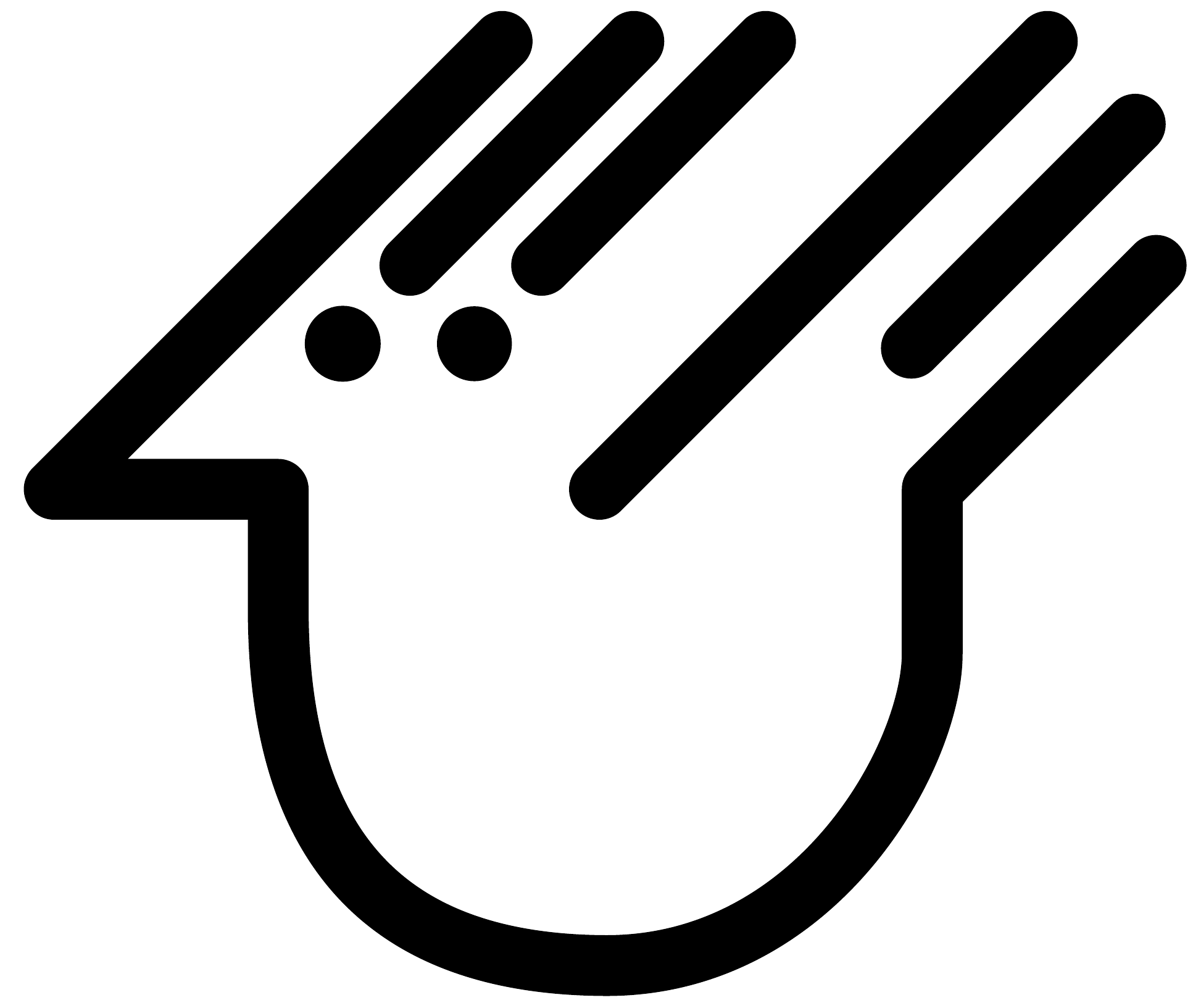}}%
}
\newcommand{\clickagdalogo}{%
  \href{\onlineHtmlURL index.html}{\usebox{\logoagdabox}}%
}
\newcommand{\agdaref}[4][]{%
  \coqref{#2}{#3}{#4%
  }{\href{\onlineHtmlURL #2.html\##3}{\usebox{\logoagdabox}}}}
\newcommand{\agdarefcustom}[4]{%
  \coqrefcustom{#1}{#2}{#3}{\href{\onlineHtmlURL #2.html\##3}{\usebox{\logoagdabox}}}%
  {#4%
  }}
\newcommand{\ownthmSpaceAbove}{5pt}
\newcommand{\ownthmSpaceBelow}{5pt}
\newcommand{\resetCurThmBraces}{%
  \gdef\curThmBraceOpen{(}%
  \gdef\curThmBraceClose{)}}
\newcommand{\removeThmBraces}{%
  \gdef\curThmBraceOpen{}%
  \gdef\curThmBraceClose{}}
\newenvironment{notheorembrackets}{%
  \removeThmBraces%
}{%
  \resetCurThmBraces%
}
\declaretheoremstyle[
spaceabove=\ownthmSpaceAbove,
spacebelow=\ownthmSpaceBelow,
headpunct=.,
postheadspace=.5em,
notebraces={\curThmBraceOpen}{\curThmBraceClose},
postheadhook={\resetCurThmBraces},
]{definition}
\declaretheoremstyle[
spaceabove=\ownthmSpaceAbove,
spacebelow=\ownthmSpaceBelow,
headpunct=.,
postheadspace=.5em,
notebraces={\curThmBraceOpen}{\curThmBraceClose},
postheadhook={\resetCurThmBraces},
bodyfont=\itshape,
]{mytheorem}
\declaretheoremstyle[
style=definition,
preheadhook=\renewcommand\@upn{},
headpunct=.,
headfont=\it,
]{remark}
\theoremstyle{mytheorem}
\newtheorem{theorem}{Theorem}[section]
\newtheorem{lemma}[theorem]{Lemma}
\newtheorem{proposition}[theorem]{Proposition}
\newtheorem{corollary}[theorem]{Corollary}
\theoremstyle{definition}
\newtheorem{definition}[theorem]{Definition}
\newtheorem{example}[theorem]{Example}
\newtheorem{assumption}[theorem]{Assumption}
\newtheorem{notation}[theorem]{Notation}
\newtheorem{remark}[theorem]{Remark}
\def\mathcolor#1#2{\color{#1}#2\color{black}}%
\def\@copyrightpermission{
\vspace{30mm} \\
Preprint}
\begin{document}
\title{Initial Algebras Unchained}
\subtitle{A Novel Initial Algebra Construction Formalized in Agda}

\author[T.~Wißmann]{Thorsten Wißmann}
\orcid{0000-0001-8993-6486}
\email{thorsten.wissmann@fau.de}
\affiliation{%
  \institution{Friedrich-Alexander-Universität Erlangen-Nürnberg}
  \streetaddress{Martensstr.\ 3}
  \city{Erlangen}
  \country{Germany}
  \postcode{91058}
}

\author[S.~Milius]{Stefan Milius}
\authornote{Funded by the Deutsche Forschungsgemeinschaft (DFG, German
Research Foundation) -- project numbers 470467389 and 517924115} %
\orcid{0000-0002-2021-1644}
\email{stefan.milius@fau.de}
\affiliation{%
  \institution{Friedrich-Alexander-Universität Erlangen-Nürnberg}
  \streetaddress{Martensstr.\ 3}
  \city{Erlangen}
  \country{Germany}
  \postcode{91058}
}

\begin{abstract}
  The initial algebra for an endofunctor $F$ provides a recursion and
  induction scheme for data structures whose constructors are
  described by $F$.  The initial-algebra construction by Ad\'amek
  (1974) starts with the initial object (e.g.~the empty set) and
  successively applies the functor until a fixed point is reached, an
  idea inspired by Kleene's fixed point theorem. Depending on the
  functor of interest, this may require transfinitely many steps
  indexed by ordinal numbers until termination.

  We provide a new initial algebra construction which is not based on
  an ordinal-indexed chain. Instead, our construction is loosely inspired by
  Pataraia's fixed point theorem and forms the colimit of all finite recursive
  coalgebras. This is reminiscent of the construction of the rational
  fixed point of an endofunctor that forms the colimit of \emph{all}
  finite coalgebras. For our main correctness theorem, we assume the
  given endofunctor is accessible on a (weak form of)
  locally presentable category.
  Our proofs are constructive and fully formalized in Agda.
\end{abstract}

\begin{CCSXML}
<ccs2012>
<concept>
<concept_id>10003752.10003790</concept_id>
<concept_desc>Theory of computation~Logic</concept_desc>
<concept_significance>500</concept_significance>
</concept>
</ccs2012>
\end{CCSXML}

\ccsdesc[500]{Theory of computation~Logic}
\keywords{%
Agda,
Initial Algebra,
Recursive Coalgebra,
Presentable Category}
\maketitle
\ifthenelse{\boolean{preprintmode}}{
\vspace{8mm}
}{}

\section{Introduction}
\twnote{}
Structural recursion is a fundamental principle in computer science; it appears whenever one traverses syntax or other inductively defined data structures such as natural numbers,
lists, or trees.
Any concrete recursion principle depends on the type of the constructors of the syntax or data structure of interest.
A uniform view on these recursion principles is provided by the theory of algebras for an endofunctor $F$ on a category, where $F$ is a parameter modelling the type of constructors.
An algebra for $F$ can be understood as an object of data with operations of the type modelled by $F$.

An initial algebra for $F$ -- if it exists -- provides the data structure which is the canonical minimal implementation of a data type with constructors specified by~$F$.
Its universal property yields a recursion and an induction principle for this data structure.

For example, for the set functor $F$ defined by $FX = \cherryfunctor{X}$ the initial $F$-algebra $(I,i)$ consists of all binary trees.
A standard example of a recursively defined function on binary trees is the \emph{height}-function $h\colon I \to \N$ defined by
\begin{equation}\label{eq:height}
  h(\bullet) = 0
  \qquad\text{and}\qquad
  h\big(
  \begin{tikzpicture}[reset attributes,bintree,baseline={([yshift=-2pt]s.north)}, level distance=4mm]
  \begin{scope}[edge from parent/.append style={child anchor=north, shorten >= 0pt}]
  \node {}
      child { node[subtree] (s) {\ensuremath{s}} }
      child { node[subtree] {\ensuremath{t}} } ;
  \end{scope}
  \end{tikzpicture}
  \big) = 1 + \max(h(s), h(t)),
\end{equation}
where the second case considers the binary tree obtained by joining the trees $s, t\in I$ under a new root node.
Initiality means that for every $F$-algebra there exists a unique homomorphism from $(I,i)$ to $(A,a)$.
For instance, the height-function $h$ is the unique function from the initial algebra $(I,i)$ to the $F$-algebra on $A= \N$ with the following structure $a$ (see also \autoref{fig:cherry-tree}): 
\begin{equation}
  a\colon F\N\to \N
  \quad
  a(\inl(\leafbullet)) = 0
  \quad
  a(\inr(k,n)) = 1+\max(k,n);
  \label{defheightalg}
\end{equation}
here $\inl\colon \set{\leafbullet} \to \cherryfunctor{\N}$ and $\inr\colon\N
\times N \to F\N$ denote the coproduct injections.
This models that a leaf has height $0$, and an inner node with children of height $k$ and $n$, respectively, has a height of $1+\max(k,n)$. 
\begin{figure}[b]
    \begin{tikzcd}[ampersand replacement=\&,row sep=10mm,column sep=3mm]
    \cherryfunctor{I}
    \arrow{d}{\cherryfunctor[\id_{\set{\leafbullet}}]{h}}
    \arrow{r}{i}
    \&[8mm] 
    I
    \arrow[dashed]{d}{\exists ! h}
    \& 
    \inr(\leafbullet,
  \begin{tikzpicture}[bintree,reset attributes,baseline=(current bounding box.south)]
  \node {}
      child { node {\leafbullet} }
      child { node {\leafbullet} } ;
  \end{tikzpicture}
    )
    \arrow[mapsto]{r}{i}
    \arrow{d}{\cherryfunctor[\id_{\set{\leafbullet}}]{h}}
    \&[8mm] 
  \begin{tikzpicture}[bintree,reset attributes,baseline={([yshift=-2pt]current bounding box.center)}]
  \node {}
    child { node {\leafbullet} }
    child { node {}
      child { node {\leafbullet} }
      child { node {\leafbullet} }
      } ;
  \end{tikzpicture}
  \arrow[mapsto]{d}{h}
    \\
    \cherryfunctor{\N}
    \arrow{r}{a}[below,pos=0.8]{\begin{array}{r@{\,}l}
      \leafbullet&\mapsto\, 0
      \\[-4pt]
      (k,n) &\mapsto\, 1+\max(k,n)
    \end{array}}
    \& \N
    \&
    \inr(0,1)
    \arrow[mapsto]{r}{a}
    \& 2
  \end{tikzcd}
  \caption{Example of an algebra $(\N,a)$ and the algebra homomorphism $h$ induced by the initial algebra $(I,i)$}
  \label{fig:cherry-tree}
\end{figure}
Note that the commutativity of the left-hand square in \autoref{fig:cherry-tree} is equivalent to the two equations in~\eqref{eq:height}; 
in particular, one sees that the arguments $k, n$ of the algebra structure~$a$ can be thought of as the returned values of the recursive calls of $h$ to the maximal subtrees $s$ and $t$ of a given binary tree which is not simply a leaf. 

Since initial algebras are an important concept, it is natural to investigate, when initial algebras exist and how they are constructed.
Their existence is entailed by assumptions on the `smallness' of the constructions performed by the endofunctor $F$. 
More precisely, every accessible endofunctor on a locally presentable category has an initial algebra; this follows from classical results by Ad\'amek~\cite{freealgebras}. 
He also provided a construction of the initial algebra for an endofunctor $F$ generalizing Kleene's
construction of the least fixed point of a continuous function on a cpo (or more generally, the corresponding transfinite version for monotone maps on chain-complete posets, which is essentially due to
Zermelo~\cite{Zermelo04}). 
For the category of sets, the construction builds the \emph{initial-algebra chain},%
\oldsmnote{@Thorsten: \emph{not} `initial chain'.} 
an ascending chain of sets, by transfinite recursion. 
It starts with the initial object, the empty set, and then successively applies the functor $F$:
\[
  \emptyset
  \to
  F\emptyset
  \to
  F^2\emptyset
  \to
  F^3\emptyset
  \to
  \cdots
\]
At a limit ordinal $i$, one takes the colimit of the chain of all previous sets. 
If the initial-algebra chain converges in the sense that a connecting map from one step to the next is an isomorphism, then its inverse is the structure of an initial algebra for $F$.

This provides a nice iterative construction of initial algebras.
However, the use of transfinite induction makes it difficult to
formalize this classical initial-algebra construction in full
generality in a proof assistant based on type theory, such as Agda,
Coq, or Lean. The problem seems to be that the notion of an ordinal is
inherently set-theoretic, and therefore does not easily translate into
type theory. It is the goal of our paper to provide a new construction
of the initial algebra which does not rely on transfinite recursion
and can be formalized in proof assistants.

Our construction is based on coalgebras obeying the recursion scheme~\cite{taylor1999}, aka.~\emph{recursive coalgebras}~\cite{osius1974,CaprettaUV06,Eppendahl99}. They are coalgebras $r\colon R \to FR$ satisfying a property much like the universal property of an initial algebra: for every $F$-algebra $b\colon FB\to B$, there exists be a unique coalgebra-to-algebra morphism:
\[
  \begin{tikzcd}
    FR
    \arrow{d}[swap]{Fh}
    &
    R \arrow{l}[swap]{r}
    \arrow[dashed]{d}{\exists! h}
    \\
    FB\arrow{r}{b}
    & B
  \end{tikzcd}
\]
Recursive coalgebras have originally arisen in the categorical study of well-founded induction~\cite{osius1974}; in fact, under mild assumptions on the endofunctor $F$ a coalgebra is recursive iff it is
\emph{well-founded}~\cite{Taylor96,Taylor23,taylor1999,JeanninKS17,AdamekMM20};%
\oldsmnote{@Thorsten: warum unklar über `strong connection' reden, wenn man es in einem Satz klar sagen kann?}
the latter notion generalizes the classical notion of a well-founded relation to the level of coalgebras for an endofunctor.

Consequently, a recursive coalgebra $R\to FR$ models a kind of decomposing or `divide' step within a recursive (divide-and-conquer) computation; this has been nicely explained by Capretta et al.~\cite{CaprettaUV06}.
Since it does not need be closed under operations of type~$F$, a
recursive coalgebra allows us to collect only some of the inhabitants
from the initial algebra or graph-like versions of the syntactic tree-like elements, e.g.~sharing subtrees.
For example,
\autoref{fig-ex-rec} shows an example of a recursive coalgebra for the set functor $FX=\cherryfunctor{X}$.
\begin{figure}
    \hfill
  \begin{tabular}{lcl@{}}
    \toprule
    $R$ & $r$ & $FR$ \\
    \midrule
    $u$ & $\mapsto$ & $\inr(x,x)$ \\
    $v$ & $\mapsto$ & $\inr(y,w)$ \\
    $w$ & $\mapsto$ & $\inr(z,y)$ \\
    $x$ & $\mapsto$ & $\inl(\leafbullet)$ \\
    $y$ & $\mapsto$ & $\inl(\leafbullet)$ \\
    $z$ & $\mapsto$ & $\inl(\leafbullet)$ \\
    \bottomrule
  \end{tabular}
  \hfill
  \begin{tikzpicture}[bintree,reset attributes,baseline=(current bounding box.center)]
  \node[label={left:$u~$}] (u) at (-15mm,0) {}
      child { node[label={left:$x~$}] (x) {\leafbullet} }
      child { node[label={right:$~x$}] {\leafbullet} } ;
  \node[label={left:$v~$}] (v) at (0,0) {}
      child { node[label={left:$y~$}] (x) {\leafbullet} }
      child { node[label={right:$~w$}] {}
        child { node[label={left:$z~$}] {\leafbullet} }
        child { node[label={right:$~y$}] {\leafbullet} }
      }
      ;
  \end{tikzpicture}
  \hfill
  \caption{Example of a recursive $F$-coalgebra $r\colon R\to FR$}
  \label{fig-ex-rec}
\end{figure}
This coalgebra is indeed recursive, because for every algebra $b\colon FB\to B$, the following function
\begin{align*}
  h\colon R&\longrightarrow B \\
  h(x) &:= h(y) := h(z) := b(\inl(\leafbullet)) \\
  h(u) &:= b(\inr(h(x),h(x))) \\
  h(w) &:= b(\inr(h(z),h(y))) \\
  h(v) &:= b(\inr(h(y),h(w)))
\end{align*}
specifies a unique coalgebra-to-algebra morphism from $(R,r)$ to $(B,b)$.
So recursive coalgebras can intuitively be understood as well-founded
data objects whose constructor types are specified by~$F$. 

An important special case is that if the structure $r\colon R \to FR$ of a recursive coalgebra happens to be an isomorphism, then $(R,r^{-1})$ is the initial algebra.

\subsection{Overview of the Contribution}
In the present work, we use this observation to provide a new
construction of the initial algebra. Intuitively, our construction is based on the idea that

\medskip
\begin{center}
  \parbox{7cm}{\itshape
  The initial algebra is the collection of well-founded data object of type $F$ (modulo behavioural equivalence).}
\end{center}

\medskip
Our construction works for every accessible endofunctor on a locally
presentable category. To make the point that our proof is constructive,
we formalize it in Agda. To this end we use the
notion of a $\Fil$-accessible category, for a collection $\Fil$ of filtered colimits,
which subsumes the notion of a locally
$\lambda$-presentable category (for a regular cardinal $\lambda$),
but without the need to mention any cardinal number $\lambda$ explicitly.

To construct an initial algebra, we consider recursive
coalgebras $r\colon R\to FR$ where $R$ is a \emph{presentable} object;
this notion generalizes that of a finite set. Therefore, such
coalgebras are said to be \emph{finite-recursive}.
We then take the colimit of all finite-recursive coalgebras for $F$ and
obtain a coalgebra $\alpha\colon A\to FA$. Even though $A$ is not
finitely presentable, $(A,\alpha)$ is
\emph{locally} finite-recursive, in the sense that it is built as a
colimit from finite-recursive coalgebras. We prove that $(A,\alpha)$
satisfies a universal property: it is the \emph{terminal}
locally finite-recursive coalgebra.

Finally, we prove that it is a fixed point of $F$. To this end, we provide a non-trivial
argument showing that $(FA,F\alpha)$ is also locally finite-recursive.
An argument in the style of Lambek's Lemma then shows that $\alpha$ is
an isomorphism, whence $(A,\alpha^{-1})$ is the initial algebra.

\paragraph{Structure of this paper.} After discussing categorical
preliminaries on locally presentable categories and recursive
coalgebras (\autoref{sec:prelim}), we establish our main initial
algebra theorem (\autoref{sec:main}). The relation and differences to
the construction based on the initial-algebra chain are discussed in
\autoref{sec:chain}. We address technical aspects and conceptual
challenges of the Agda formalization in \autoref{sec:agda}. The main
text concludes in \autoref{sec:conclusions}. The index of formalized
results (\autoref{agdarefsection}) links the results from the paper with
the corresponding result in Agda.

\subsection{Agda Formalization}
Our theorem is fully formalized in Agda (2.6.4) using the \texttt{ag\-da-cate\-go\-ries} library (v0.2.0)~\cite{agda-categories}.
Despite the effort, we went for the formalization for the following reasons:
\begin{enumerate}

\item Underpin our claim that our results are constructive. Agda makes
  it explicitly visible where non-constructive methods (such as the
  law of excluded middle) are used in our proofs or in
  proofs of standard lemmas (especially on hom-colimits and locally
  presentable categories).  In addition, the distinction between small
  and large colimits becomes visible due to Agda's type level system.

\item Exclude mistakes; especially the proof that $(FA,F\alpha)$ is 
  locally finite-recursive turned out to be non-trivial.

\item Contribute to the growing field of mechanized mathematics. We kept many
  lemmas on colimits and coalgebras as general as possible and
  will submit them to the \texttt{agda-categories} project.%

\end{enumerate}
The paper is phrased in standard set theory and category theory
to be more accessible to general readers, but we keep it as close to our Agda formalization as
possible.

Our Agda source code spans more than 5000 lines and 29 files. Both the source code and the HTML documentation can be found in the ancillary files on arxiv.org and on:
\begin{center}
\nicehref{\sourceRepoURL}{\sourceRepoURL}
\\
(also archived on \nicehref{\softwareHeritageURL}{archive.softwareheritage.org})
\end{center}

We annotate mechanized definitions and theorems with a
clickable icon \clickagdalogo{} which links
to the online HTML documentation of the respective result in the Agda code base.
So readers who want to have a quick look may browse the linked HTML
documentation.
Additional clues about the formalization (and clickable links) can be found in the index
in \autoref{agdarefsection} at the end of this document.

\subsection{Related Work}
Adámek et al.~\cite{AdamekMM21} have provided a proof of an initial algebra theorem which uses 
recursive coalgebras to construct the initial algebra from a given pre-fixed point of $F$ (that is, an algebra $A$ whose structure morphism is monomorphic).
They use Pataraia's fixed point theorem to obtain the initial algebra as the least fixed point of a monotone operator on the lattice of subobjects of $A$.
However, assuming existence of a pre-fixed point is quite strong; establishing this is almost as difficult as proving existence of an initial algebra.
Our construction works without that assumption and obtains (the carrier of) the initial algebra as a colimit.

Pitts and Steenkamp~\cite{PittsSteenkamp21} formalized an initial algebra theorem in Agda using
\emph{inflationary iteration} and avoiding transfinite iteration.

There have been efforts to incorporate ordinals in
Agda~\cite{ForsbergXG20} and transfinite induction in
Coq~\cite{transfinitecoq}. However, it is not clear to us whether
this could be used for any formalization of the initial-algebra chain
or the notion of a locally $\lambda$-presentable category.

Our construction is reminiscent of the construction of the rational
fixed point of a finitary endofunctor on a locally finitely
presentable category~\cite{AdamekMV06}. This is constructed by taking
the colimit of \emph{all} coalgebras with a finitely presentable
carrier (in lieu of all finite-recursive) coalgebras. In
addition, our technique of applying the functor to a locally finite-recursive coalgebra
(\autoref{iterate-locally-finrec}) is reminiscent of what is done in
work on the rational fixed
point~\cite{Milius10,mpw16,Urbat17,MiliusPW19} to show that
it is indeed a fixed point. Beside incorporating recursiveness, we
needed to slightly adapt
the ideas to make them work in the Agda formalization.

Our notion of a $\Fil$-accessible category (\autoref{D:Fil-acc}) is very similar to a notion used by Urbat~\cite{Urbat17}. It is also reminiscent of the notion of a $\mathbb D$-accessible category introduced by Ad\'amek et al.~\cite{AdamekEA02} and further studied by Centazzo et al.~\cite{Centazzo04, CentazzoEA04}.

\section{Categorical Preliminaries}\label{sec:prelim} 
We assume basic knowledge of category theory, functors, and co\-limits; see
\cite{awodey2010category,joyofcats} for a detailed introduction.
\begin{notation}
  Both in this paper and in the Agda source code, we use calligraphic letters
  $\C$, $\D$, $\E$ for categories, latin letters for functors and objects, and
  serif-free font for identifiers consisting of multiple letters, such as
  $\Set$ for the category of sets and maps.

  We denote the coproduct of objects $X,Y\in \C$ by $X+Y$ (if it
  exists) and we write $[f,g]\colon X+Y\to Z$ for the unique morphism
  induced by the morphisms $f\colon X\to Z$ and $g\colon Y\to Z$.
\end{notation}

\subsection{Finiteness in a category}
\label{sec:lfp}
A standard way to capture the notion of finiteness of an object in a category
is in terms of the preservation of filtered colimits:
\begin{definition}\label{lfpprelim} We recall from Ad\'amek and Rosick\'y~\cite{adamek1994locally}:
\begin{enumerate}%

\item\label{presColim} A functor $F\colon \C\to \C'$ \emph{preserves the colimit} $(c_i\colon
Di\to C)_{i\in \D}$ of a diagram $D\colon \D\to \C$ if $(Fc_i\colon FDi\to
FC)_{i\in \D}$ is a colimit of the diagram $F\circ D\colon \D\to \C'$
(\agdarefcustom{\itemref{lfpprelim}{presColim}}{Colimit-Lemmas}{preserves-colimit}{}).

\item\label{filtered} A category $\D$ is \emph{filtered} (\agdarefcustom{\itemref{lfpprelim}{filtered}}{Filtered}{filtered}{}) provided that
  \begin{enumerate}[left=8pt]
  \item $\D$ is non-empty,
  \item for every $X,Y\in \D$ there is an \emph{upper bound}
    $Z\in \D$, that is, there are morphisms $X\to Z$ and $Y\to Z$,
  \item for every $f,g \colon X\to Y$ there is some $Z \in \D$ and some $h\colon Y\to Z$ with $h\circ g = h\circ f$.
  \end{enumerate}
  A diagram $D\colon \D \to \C$ is \emph{filtered} if $\D$ is a filtered
  category, and a colimit of a filtered diagram is said to be \emph{filtered}.

\item A functor $F\colon \C\to \C'$ is \emph{finitary} if it preserves
  filtered colimits. 

\item An object $X\in \C$ is \emph{finitely presentable} if its hom-functor
$\C(X,-)\colon \C\to \Set$ is finitary.

\end{enumerate}
\end{definition}
\begin{remark}
  Note that the notions of small and large diagram schemes behave slightly
  differently in Agda than in classic set theory when quantifying over all
  sets. Therefore, we do not distinguish between small and large
  diagram schemes $\D$ above: preservation of colimits simply refers to those colimits
  that exist.
\end{remark}
\begin{remark}
  Finitariness can equivalently be defined using \emph{directed}
  diagrams (i.e.~those where the diagram scheme is a directed poset)
  in lieu of filtered ones~\cite[Thm.~1.5]{adamek1994locally}. We
  decided to work with filtered diagrams because this simplifies the
  formalization of locally finitely presentable categories (\autoref{D:lfp}).
\end{remark}
\begin{notheorembrackets}
\begin{example}[{\cite[Ex.\ 1.2]{adamek1994locally}}]
  Finite presentability instantiates to standard notions of finiteness in the following categories:
  \begin{enumerate}
  \item In the categories of sets, posets and graphs, the finitely
    presentable objects are precisely the finite sets, posets and
    graphs, respectively.
  \item In the category of vector spaces over a field $k$, the
    finitely presentable objects are precisely the finite dimensional
    vector spaces. 
  \item In the category of groups and monoids, the finitely
    presentable objects are precisely those which can be presented by
    finitely many generators and relations.
    
  \item More generally, in every finitary variety, i.e.~a category of algebras for a finitary
    signature satisfying a set of equations, the finitely presentable
    objects are precisely those presented by finitely many generators
    and relations. 
  \end{enumerate}
\end{example}
\end{notheorembrackets}
\begin{remark}\label{R:colim-set}
  Filtered colimits in $\Set$ can be characterized as follows. Given
  a filtered diagram $D\colon \D \to \Set$, a cocone $c_i\colon Di \to
  C$ ($i \in \D$) is a colimit of $D$ if and only if it satisfies the two conditions:
  \begin{enumerate}
  \item\label{cond:colim-set:1} the colimit injections are jointly surjective: for every $x
    \in C$ there exist $i \in \D$ and $x' \in Di$ such that $x =
    c_i(x')$;
  \item\label{cond:colim-set:2}
    for every pair $x', x'' \in
    Di$ with $c_i(x') = c_i(x'')$ there exists a morphism $h\colon i \to j$
    in $\D$ such that $Dh(x') = Dh(x'')$.
  \end{enumerate}

  General (non-filtered) colimits in $\Set$ have a similar characterization given by condition~\ref{cond:colim-set:1} and a more involved version of~\ref{cond:colim-set:2}.
  Condition \autoref{cond:colim-set:2} makes use of filteredness: whenever two elements are
  identified in the colimit, then are already identified by a
  connecting morphism of the diagram.

\end{remark}
When instantiating this characterization to filtered diagrams $D\colon \D\to
\C$ postcomposed with a hom-functor $\C(X,-)$ of some $X\in\C$, we obtain:
\begin{lemma}[\agdaref{Hom-Colimit-Choice}{hom-filtered-colimit-characterization}{}]\label{R:fpob}
  The hom-functor $\C(X,-)$ for an object $X$ preserves
  the colimit $C$ of a filtered diagram $D$ iff every morphism
  $f\colon X\to C$
  factorizes essentially uniquely through one of the colimit injection
  $c_i\colon Di \to C$ ($i \in \D$) in the sense that:
  \begin{enumerate}
  \item\label{R:fpob:1} there exist $i\in \D$ and $f'\colon X \to Di$ such that $f =
    c_i \circ f'$:
    \[
      \begin{tikzcd}[column sep=4mm]
        X
        \arrow{r}{\forall f}
        \arrow[dashed]{dr}[below left]{\exists i\text{ and }f'}
        &[5mm] C
        \descto{drr}{\(\Longleftrightarrow\)}
        &[15mm]&
        \mathllap{\forall~ }f
        \descto{r}{\normalsize\ensuremath{\displaystyle\in}}
        & \C(X,C)
        \\
        & Di
        \arrow{u}[swap]{c_i\text{ in }\C}
        &{}&
        \mathllap{\exists i\text{ with }~~}f'
        \arrow[mapsto]{u}[right]{\hspace*{1.3mm}\C(X,c_i)}
        \descto{r}{\normalsize\ensuremath{\displaystyle\in}}
        & \C(X,Di)
        \arrow{u}[right]{~~~\text{in }\Set}
      \end{tikzcd}
    \]
    
  \item\label{R:fpob:2} given two such factorizations $c_i \circ f' = c_i \circ f''$
    of $f$ there exists a morphism $h\colon i \to j$ of $\D$ such
    that $Dh \circ f' = Dh \circ f''$.
    \[
      \begin{tikzcd}[column sep=15mm,row sep=12mm]
        X
        \arrow{r}{c_i\circ f' = c_i\circ f''}
        \arrow[shift left=1]{dr}{f'}
        \arrow[shift right=1]{dr}[swap]{f''}
        & C
        \\
        & Di
        \arrow{u}[swap]{c_i}
        \arrow[dashed]{r}{Dh}
        & Dj
      \end{tikzcd}
    \]
  \end{enumerate}
\end{lemma}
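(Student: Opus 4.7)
The plan is to reduce the claim to the characterization of filtered colimits in $\Set$ given in \autoref{R:colim-set} by postcomposing the diagram $D$ with the hom-functor $\C(X,-)$. By definition, $\C(X,-)$ preserves the colimit $(c_i\colon Di \to C)_{i \in \D}$ of $D$ exactly when the cocone $(\C(X,c_i)\colon \C(X,Di) \to \C(X,C))_{i\in\D}$ is a colimit in $\Set$ of the composed diagram $\C(X,-)\circ D\colon \D \to \Set$. The scheme $\D$ is unchanged under postcomposition, so since $D$ is filtered, the $\Set$-valued diagram $\C(X,-)\circ D$ is filtered as well. Hence \autoref{R:colim-set} applies and gives an \emph{iff} characterization in terms of its two conditions.

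It then suffices to translate those conditions through the hom-functor. An element of $\C(X,C)$ is a morphism $f\colon X\to C$, an element of $\C(X,Di)$ is a morphism $f'\colon X \to Di$, the action of $\C(X,c_i)$ is precomposition by the identity and postcomposition with $c_i$, i.e.\ $f' \mapsto c_i\circ f'$, and similarly for $\C(X,Dh)$. Under this dictionary, condition~\ref{cond:colim-set:1} of \autoref{R:colim-set} (joint surjectivity of the colimit injections) becomes precisely condition~\ref{R:fpob:1} of the lemma (existence of a factorization), and condition~\ref{cond:colim-set:2} of \autoref{R:colim-set} (two pre-images in the same $Di$ get identified by a connecting morphism) becomes precisely condition~\ref{R:fpob:2} (essential uniqueness of the factorization). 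The two \emph{iff}s compose to the claim.

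No serious obstacle is expected: the argument is essentially bookkeeping. The only subtle points are (i) confirming that filteredness of the diagram scheme $\D$ is what is needed — not filteredness of the resulting $\Set$-valued diagram as a separate hypothesis — so that \autoref{R:colim-set} is applicable after postcomposition, and (ii) making sure in the Agda formalization that the two factorizations $f'$, $f''$ in part~\ref{R:fpob:2} are really allowed to land in the \emph{same} $Di$; if the original statement in \autoref{R:colim-set} were phrased with possibly distinct indices, one would first have to use filteredness of $\D$ to find an upper bound and transport both factorizations there before invoking the identification step. Given the formulation used in \autoref{R:colim-set}, however, this issue does not arise and the two conditions match on the nose.
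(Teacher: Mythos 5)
Your argument is exactly the paper's: the stated proof is the one-line remark following the lemma, namely to apply \autoref{R:colim-set} to the cocone $\C(X,c_i)$ ($i\in\D$) for the postcomposed diagram and translate the two conditions through the hom-functor. Your additional care about filteredness of the scheme and about both factorizations landing in the same $Di$ is sound and matches the paper's formulation.
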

\noindent
Indeed, apply \autoref{R:colim-set} to the cocone $\C(X,c_i)$ ($i \in \D$).

Next we recall the notion of a locally finitely presentable
category. The idea is that every object in such a category can be
constructed from a set of finitely presentable ones in a canonical way:
\begin{definition}[\agdaref{Canonical-Cocone}{}{The file defines the category, the forgetful functor, and the cocone \eqref{sliceCocone}}]\label{D:U}
For a set $\obSet$ of objects of $\C$ and $X\in \C$, define the category
$\obSet/X$ to have
\begin{itemize}[left=0pt]
\item objects $(S,f)$ for $S\in \obSet$ and $f\colon S\to X$ (in $\C$), and
\item morphisms $h\colon (S,f) \to (T,g)$ for $h\colon S\to T$ with $g\circ h
= f$ in $\C$.
\end{itemize}
The functor $\USX\colon \obSet/X\to \C$
is defined by $\USX(S,f) = S$.
Every such diagram $\USX$ has a canonical cocone:
\begin{equation}
  \USX(S,f) \overset{f}{\longrightarrow} X \qquad\text{for $(S,f) \in \obSet/X$}.
  \label{sliceCocone}
\end{equation}
\end{definition}
\begin{notheorembrackets}
\begin{definition}[{\cite{adamek1994locally}}]\label{D:lfp}
  A category $\C$ is \emph{locally finitely presentable} (\emph{lfp},
  for short) provided that it is cocomplete and has a set $\C_\fp$ of
  finitely presentable objects such that every object $X\in \C$ is
  the colimit of $\UX\colon \C_\fp/X \to \C$.
\end{definition}
\end{notheorembrackets}

\begin{example}
  Examples of lfp categories are ubiquitous. For example, the
  categories of sets, posets and graphs are lfp. Every finitary
  varieties of algebras forms an lfp category. Instances of this
  are the categories of groups, monoids and vector
  spaces and many others.
\end{example}

\subsection{Algebra and Coalgebra}

We recall some basic notions from the theory of (co-)algebras for an endofunctor. 
\begin{definition}
  Given an endofunctor $F\colon \C\to \C$, an $F$-algebra is a pair $(A,a)$ where $A$ is an object of $\C$ (the \emph{carrier} of the algebra) and with a morphism $a\colon FA\to A$ a morphism (its \emph{structure}).
  Dually, an $F$-coalgebra is a pair $(C,c)$ consisting of a carrier object $C$ and a structure morphism $c\colon C\to FC$.
\end{definition}

Intuitively, an $F$-algebra provides an object of data values
together with operations described by $F$. For example,
for the set
functor $FX = \cherryfunctor{X}$, an $F$-algebra $a\colon FA\to A$ consists of
\begin{itemize}[left=0pt]
\item some constant $a(\inl(\leafbullet)) \in A$, and
\item a binary operation sending $x,y\in A$ to $a(\inr(x,y)) \in A$.
\end{itemize}
An $F$-coalgebra $c\colon C\to FC$ provides an object of states with
abstract transitions or a structured collection of successors
described by $F$. For example, for the above set
functor $F$ we have for every state $x \in C$ that 
\begin{itemize}[left=0pt]
\item either $c(x) \in \set{\leafbullet}$, describing that $x$ has no successor,
\item or $c(x) \in C\mathcolor{innernodecolor}{\times} C$, describing
  that $x$ has a pair of successors.
\end{itemize}
So coalgebras can be thought of as graph like structures.

We relate $F$-algebras and $F$-coalgebras using the following notions of
morphisms:
\begin{definition}
  An \emph{$F$-algebra morphism} $h$ from $(A,a)$ to $(B,b)$ is a
  $\C$-morphism $h\colon A\to B$ satisfying $h\circ a = b\circ Fh$
  (see the right-hand square in diagram~\eqref{diag:mor}). An
  \emph{$F$-coalgebra morphism} $g$ from $(C,c)$ to $(D,d)$ is a
  $\C$-morphism $g\colon C\to D$ satisfying $d\circ g = Fg\circ c$
  (see the left-hand square in~\eqref{diag:mor}).  A
  \emph{coalgebra-to-algebra morphism} from a coalgebra $(D,d)$ to an
  algebra $(A,a)$ is a $\C$-morphism $s\colon D\to A$ such that
  $s = a\circ Fs\circ d$ (see the middle square in~\eqref{diag:mor}).
  \begin{equation}\label{diag:mor}
    \begin{tikzcd}
      C
      \arrow{r}{g}
      \arrow{d}[swap]{c}
      & D
      \arrow{r}{s}
      \arrow{d}[swap]{d}
      & A
      \arrow{r}{h}
      \arrow[<-]{d}{a}
      & B
      \arrow[<-]{d}{b}
      \\
      FC
      \arrow{r}{Fg}
      & FD
      \arrow{r}{Fs}
      & FA
      \arrow{r}{Fh}
      & FB
    \end{tikzcd}
  \end{equation}
\end{definition}

\begin{definition}
  A coalgebra $(R,r)$ is \emph{recursive} if for every algebra $(A,a)$ there is a unique coalgebra-to-algebra morphism from $(R,r)$ to $(A,a)$.
\end{definition}

In addition to the example of a recursive coalgebra for $FX=\cherryfunctor{X}$ in the introduction, we now discuss further examples which illustrate the point that recursive coalgebras relate to well-founded induction and that they capture the \textqt{divide} step in recursive divide-and-conquer computations.
\begin{example}
  \begin{enumerate}
  \item The first examples of recursive coalgebras are well-founded relations.
    Recall that a binary relation $R$ on a set $X$ is well-founded iff there is no infinite descending sequence of related elements:
    \[
      \cdots \mathbin{R} x_3 \mathbin{R} x_2 \mathbin{R} x_1 \mathbin{R} x_0.
    \]
    A binary relation $R\subseteq X \times X$ is, equivalently, a coalgebra for the power-set functor $\pow$ which maps a set $X$ to the set $\pow X$ of all its subsets: $R$ corresponds to $c\colon X \to \pow X$ defined by $c(x) = \set{y : y \mathbin{R} x}$.
    The relation $R$ is well-founded iff the associated $\pow$-coalgebra is recursive.

  \item For every endofunctor $F$ having an initial algebra $i\colon FI \to I$, its structure is an isomorphism (by Lambek's Lemma~\cite{lambek}), and the inverse $i^{-1}\colon I \to FI$ is a recursive coalgebra.

  \item Capretta et al.~\cite{CaprettaUV06} have shown how to obtain Quicksort using recursivity.
    Let $C$ be any linearly ordered set (of data elements). Quicksort is the recursive function $q\colon C^* \to C^*$ defined by
    \[
      q(\eps) = \eps \qquad\text{and}\qquad
      q(cw) = q(w_{\leq c}) \star (c q(w_{> c})),
    \]
    where $C^*$ is the set of all lists on $C$, $\eps$ is the empty list, $\star$ is the concatenation of lists and $w_{\leq c}$ denotes the list of those elements of~$w$ that are less than or equal to $c \in C$; analogously for $w_{> c}$.

    Now consider the set functor $FX =\set{\bullet} + C \times X \times X$ and define the coalgebra
    $s\colon C^* \to \set\bullet + C \times C^* \times C^*$ by
    \[
      s(\eps) = \bullet
      \quad\text{and}\quad
      s(cw) = (c, w_{\leq c}, w_{> c})
      \quad\text{for $c \in C$ and $w\in C^*$}.
    \]
    This coalgebra is recursive. Thus, for the following $F$-algebra $m\colon \set\bullet + C \times C^* \times C^* \to C^*$ defined by
    \[
      m(\bullet) = \varepsilon
      \qquad\text{and}\qquad
      m(c,w,v) = w \star (cv) 
    \]
    there exists a unique function $q$ on $C^*$ such that $q = m \circ Fq \circ s$.

    Note that this equation reflects the idea that Quicksort is a divide-and-conquer algorithm.
    The coalgebra structure $s$ divides a list into two parts $w_{\leq c}$ and $w_{> c}$.
    Then $Fq$ sorts these two smaller lists, and finally, in the combine (or conquer) step, the algebra structure $m$ merges the two sorted parts to obtain the desired whole sorted list.
  \end{enumerate}
\end{example}

Jeannin et al.~\cite[Sec.~4]{JeanninKS17} provide a number of further recursive functions arising in programming that are determined by recursivity of a coalgebra, e.g.~the Euclidean algorithm for the gcd of integers, Ackermann's function, and the Towers of Hanoi.

\begin{notation}
  We denote the category of $F$-coalgebras with their morphisms by $\Coalg$
  and the respective category of $F$-algebras by $\Alg$.
  The canonical forgetful functor $V\colon \Coalg\to\C$ is defined by $V(C,c) = C$ on objects, and it is identity on morphisms.
\end{notation}

\begin{lemma}[\agdaref{F-Coalgebra-Colimit}{}{We have various formulations,
depending on whether one needs the entire colimit or wants to prove that a
particular cocone in coalgebras is colimitting.}] \label{createCoalgColim}
The forgetful functor $V\colon \Coalg\to\C$ creates all colimits.
\end{lemma}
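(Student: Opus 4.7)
The plan is to show that for any diagram $D\colon \D \to \Coalg$ with $D(j) = (X_j, \xi_j)$, once a colimit $(\kappa_j\colon X_j \to L)_{j\in\D}$ of the underlying $\C$-diagram $V\circ D$ is given, there exists a unique coalgebra structure $\gamma\colon L\to FL$ turning each $\kappa_j$ into an $F$-coalgebra morphism, and that $((L,\gamma),(\kappa_j)_{j\in\D})$ is a colimit of $D$ in $\Coalg$. Together these two facts express that $V$ creates the colimit.

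First I would construct $\gamma$ by the universal property of the colimit. For each $j\in\D$ consider $F\kappa_j\circ \xi_j\colon X_j \to FL$. For any morphism $f\colon j\to j'$ in $\D$, the fact that $Df$ is a coalgebra morphism, i.e.\ $F(Df)\circ \xi_j = \xi_{j'}\circ Df$, together with $\kappa_{j'}\circ Df = \kappa_j$, gives
\[
  F\kappa_{j'}\circ \xi_{j'}\circ Df
  = F\kappa_{j'}\circ F(Df)\circ \xi_j
  = F(\kappa_{j'}\circ Df)\circ \xi_j
  = F\kappa_j \circ \xi_j,
\]
so $(F\kappa_j\circ \xi_j)_{j\in\D}$ is a cocone over $V\circ D$. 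The colimit property yields a unique $\gamma\colon L\to FL$ with $\gamma\circ \kappa_j = F\kappa_j\circ \xi_j$ for every $j$, which is precisely the coalgebra-morphism condition for each $\kappa_j$. Uniqueness of $\gamma$ simultaneously supplies the \emph{unique lift} part of creation: any coalgebra structure on $L$ making the $\kappa_j$ coalgebra morphisms must satisfy this same equation.

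Next I would check that $((L,\gamma),(\kappa_j))$ is colimitting in $\Coalg$. A competing cocone consisting of coalgebra morphisms $h_j\colon (X_j,\xi_j)\to (A,\alpha)$ forgets to a cocone in $\C$, hence determines a unique mediating $h\colon L\to A$ with $h\circ \kappa_j = h_j$. To see $h$ is a coalgebra morphism it suffices to check $\alpha\circ h = Fh\circ \gamma$ after precomposition with every $\kappa_j$; both sides then reduce to $Fh_j\circ \xi_j$ using the defining equation of $\gamma$ and the coalgebra-morphism property of $h_j$. Uniqueness of $h$ in $\Coalg$ is inherited from uniqueness in $\C$ since $V$ is faithful on morphisms.

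I do not anticipate a genuine obstacle: the argument is the textbook pattern by which a \textqt{structure-forgetting} functor creates colimits, relying only on the coalgebra-morphism equation and the universal property of colimits in $\C$. The only care needed is keeping track of which cocones live in $\C$ versus $\Coalg$; notably, $F$ is not assumed to preserve the colimit at hand, and the proof never requires it to.
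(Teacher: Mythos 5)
Your proof is correct and is exactly the standard argument the paper has in mind: the paper states this lemma without a written proof (deferring to the Agda formalization), but the explanatory paragraph following it describes precisely the content you establish — the unique induced coalgebra structure on the colimit making the injections coalgebra morphisms, and the resulting cocone being colimiting in $\Coalg$. Your observation that $F$ need not preserve the colimit is also the right one; no gaps.
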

\noindent
This means that, given a diagram $D\colon \D\to\Coalg$, if the composed diagram $V\circ D\colon \D\to \C$ has a colimit $(h_i\colon VDi\to C)_{i\in \D}$, then there is a unique coalgebra structure $c\colon C \to FC$ such that all the colimit injections $h_i$ are coalgebra morphisms.
Moreover, $(C,c)$ is then a colimit of $D$. 
\begin{lemma}[\agdaref{Coalgebra.Recursive}{Limitting-Cocone-IsRecursive}{}] \label{recursiveColimit}
  Recursive coalgebras are closed under colimits created by $V$.
\end{lemma}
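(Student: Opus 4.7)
The plan is to show existence and uniqueness of a coalgebra-to-algebra morphism from the colimit into an arbitrary algebra $(A,a)$, working entirely in $\C$ and exploiting that the colimit is created by $V$. Write $D\colon \D \to \Coalg$ with $D(i) = (R_i, r_i)$, and let $(h_i\colon R_i \to C)_{i \in \D}$ be a colimit of $V \circ D$ in $\C$, so that by \autoref{createCoalgColim} there is a unique structure $c\colon C \to FC$ turning each $h_i$ into a coalgebra morphism. Assume that each $(R_i,r_i)$ is recursive and fix an algebra $(A,a)$; we want to construct a unique $s\colon C \to A$ with $s = a \circ Fs \circ c$.

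For existence, use the recursiveness of each $(R_i,r_i)$ to obtain a unique coalgebra-to-algebra morphism $s_i\colon R_i \to A$. First I would check that $(s_i)_{i \in \D}$ is a cocone for $V \circ D$: for each $f\colon i \to j$ in $\D$, the composite $s_j \circ Df$ satisfies
\[
  s_j \circ Df = a \circ Fs_j \circ r_j \circ Df = a \circ F(s_j \circ Df) \circ r_i,
\]
so by uniqueness applied to $(R_i,r_i)$ we get $s_j \circ Df = s_i$. The universal property of the colimit $(h_i)$ in $\C$ then yields a unique $s\colon C \to A$ with $s \circ h_i = s_i$ for all $i$.

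To see that $s$ is a coalgebra-to-algebra morphism, I would compare $s$ and $a \circ Fs \circ c$ as mediating morphisms of the cocone $(s_i)$. Using that $h_i$ is a coalgebra morphism, i.e.\ $c \circ h_i = Fh_i \circ r_i$, one computes
\[
  a \circ Fs \circ c \circ h_i = a \circ F(s \circ h_i) \circ r_i = a \circ Fs_i \circ r_i = s_i = s \circ h_i,
\]
so uniqueness of the mediating morphism forces $s = a \circ Fs \circ c$. For uniqueness of $s$, any $t\colon C \to A$ satisfying $t = a \circ Ft \circ c$ gives, by the same computation, a coalgebra-to-algebra morphism $t \circ h_i$ from $(R_i,r_i)$ to $(A,a)$, hence $t \circ h_i = s_i = s \circ h_i$ by recursiveness of $(R_i,r_i)$, and then $t = s$ by the colimit property.

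The argument is essentially routine once one has the right diagrammatic picture; the only mild subtlety is making sure to use $V$-creation in exactly the right place so that $c \circ h_i = Fh_i \circ r_i$ is available and the colimit property is applied in $\C$, not in $\Coalg$. In the Agda formalization I would expect the main obstacle to be bureaucratic: assembling the cocone $(s_i)$ with its coherence condition and invoking the two layers of uniqueness (from recursiveness and from the colimit) without getting lost in the equational rewriting.
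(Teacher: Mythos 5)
Your proof is correct and follows essentially the same route as the paper's proof sketch: obtain the unique coalgebra-to-algebra morphisms from each diagram object, check they form a cocone, induce the mediating morphism via the colimit in $\C$, and verify both directions of the coalgebra-to-algebra property using that the colimit injections are coalgebra morphisms and jointly epic. Your version simply fills in the equational details that the paper leaves to the reader.
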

\begin{proof}[Proofsketch]
  Suppose that $D\colon \D\to \Coalg$ is a diagram such that $Dd = (C_d, c_d)$
  is recursive for every $d \in \D$, we have to show that its colimit
  $(C,c)$ is recursive, too. We denote the colimit injections by
  $i_d\colon (C_d, c_c) \to (C,c)$.

  Given an algebra $(A,a)$, we obtain for every $d \in D$ a unique
  coalgebra-to-algebra morphism $h_d$ from $(C,d)$ to $(A,a)$. It is
  not difficult to prove that $(h_d)$ is a cocone (of $VD$). Since the
  colimit of~$D$ is created by $V$, there exists a unique morphism
  $h\colon C \to A$ such that $h \circ i_d = h_d$ for all
  $d \in \D$. One now readily proves that $h$ is a unique
  coalgebra-to-algebra morphism from $(C,c)$ to $(A,a)$.
\end{proof}

A helpful lemma to prove recursiveness is the following
\begin{lemma}[\agdaref{Coalgebra.Recursive}{sandwich-recursive}{}, {\cite[Prop.\ 5]{CaprettaUV06}}]
  \label{sandwich-recursive}
  Consider coalgebra morphisms
  \[
  h\colon (R,r)\to (B,b) ~~\text{and}~~ g\colon (B,b)\to
  (FR,Fr)
  \]
  with $b = Fh\circ g$.
  If $(R,r)$ is recursive, then $(B,b)$ is recursive, too.
\end{lemma}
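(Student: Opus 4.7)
The plan is to show that, for any $F$-algebra $(A,a)$, there exists a unique coalgebra-to-algebra morphism from $(B,b)$ to $(A,a)$. Since $(R,r)$ is recursive, we immediately get a unique coalgebra-to-algebra morphism $s\colon R\to A$ satisfying $s = a\circ Fs\circ r$. The key guess is to define
\[
  t := a \circ Fs \circ g \colon B \to A,
\]
using $g\colon B\to FR$ to transport the recursion for $(R,r)$ along to $(B,b)$.

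For existence, I would verify $t = a\circ Ft\circ b$ by a direct calculation, substituting the sandwich equation $b = Fh\circ g$ to obtain $a\circ Ft\circ b = a\circ Ft\circ Fh\circ g$, and then using the coalgebra-morphism law for $g$, namely $Fg\circ b = Fr\circ g$, to reduce the middle factor. After unfolding $t$ once and applying the recursion equation for $s$ (that is, $a\circ Fs\circ r = s$ pushed under $F$), the right-hand side collapses back to $a\circ Fs\circ g = t$.

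For uniqueness, suppose $t'\colon B\to A$ is any coalgebra-to-algebra morphism, so $t' = a\circ Ft'\circ b$. I would first show that the composite $t'\circ h\colon R\to A$ is itself a coalgebra-to-algebra morphism from $(R,r)$ to $(A,a)$: this is a one-line diagram chase using that $h$ is a coalgebra morphism ($b\circ h = Fh\circ r$). By the recursiveness of $(R,r)$ this forces $t'\circ h = s$. Substituting this back into $t' = a\circ Ft'\circ Fh\circ g = a\circ F(t'\circ h)\circ g$ yields $t' = a\circ Fs\circ g = t$.

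I do not expect a genuine obstacle; the content of the proof is the choice of $t$, after which both halves are short naturality/diagram computations. The only place requiring a little care is ensuring that the two hypotheses are used in the right roles: the sandwich identity $b = Fh\circ g$ and the coalgebra-morphism property of $g$ are what power existence, while the coalgebra-morphism property of $h$ is what makes $t'\circ h$ satisfy the recursion over $(R,r)$ and hence drives uniqueness.
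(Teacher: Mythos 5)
Your proposal is correct and is essentially the standard argument for this lemma: the paper gives no inline proof, deferring to Capretta et al.\ (Prop.~5) and the Agda formalization, and the proof there is exactly your construction $t := a\circ Fs\circ g$ with the two verifications you describe. One small remark on the existence step: substituting $b = Fh\circ g$ first is redundant --- after unfolding $Ft$ you just undo it via $Fg\circ Fh\circ g = Fg\circ b = Fr\circ g$ --- so the cleaner chain is $a\circ Ft\circ b = a\circ Fa\circ FFs\circ Fg\circ b = a\circ Fa\circ FFs\circ Fr\circ g = a\circ F(a\circ Fs\circ r)\circ g = a\circ Fs\circ g = t$, using only the coalgebra-morphism law for $g$ and the recursion equation for $s$; the sandwich identity and the coalgebra-morphism law for $h$ are needed only for uniqueness, exactly as you use them.
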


For $g=\id$, we immediately obtain a corollary stating that recursive coalgebras are closed
under application of $F$:
\begin{corollary}[{\agdaref{Coalgebra.Recursive}{iterate-recursive}{}, \cite[Prop.\ 6]{CaprettaUV06}}]
  \label{iterate-recursive}
  If $(R,r)$ is a recursive coalgebra, then so is $(FR,Fr)$.
\end{corollary}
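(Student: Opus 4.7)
The plan is to apply \autoref{sandwich-recursive} with $g = \id_{FR}$, as the hint in the paper suggests. So the task is to cook up a coalgebra morphism $h\colon (R,r) \to (B,b)$ together with a coalgebra morphism $g\colon (B,b) \to (FR, Fr)$ satisfying $b = Fh \circ g$, where $(B,b)$ will be instantiated to $(FR, Fr)$.

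The natural choice is to take $B := FR$, $b := Fr$, $g := \id_{FR}$, and $h := r$. I would verify in sequence:
\begin{enumerate}
\item $h = r$ is a coalgebra morphism from $(R,r)$ to $(FR, Fr)$: the required square $Fr \circ r = Fh \circ r$ holds trivially since $h = r$.
\item $g = \id_{FR}$ is (obviously) a coalgebra morphism from $(FR, Fr)$ to itself.
\item The sandwich condition $b = Fh \circ g$ reduces to $Fr = Fr \circ \id_{FR}$, which is immediate.
\end{enumerate}
Then \autoref{sandwich-recursive} delivers recursiveness of $(FR, Fr)$ from recursiveness of $(R,r)$.

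There is no real obstacle here; the corollary is essentially a one-line instantiation of the sandwich lemma. The only mild subtlety is recognizing that the \emph{coalgebra} morphism $r\colon (R,r) \to (FR, Fr)$ is exactly what makes the sandwich fit, which is a pleasant consequence of the definition of a coalgebra morphism applied to the structure map itself.
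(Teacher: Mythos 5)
Your proposal is correct and is exactly the paper's argument: the paper derives the corollary from \autoref{sandwich-recursive} by setting $g=\id$, which forces $(B,b)=(FR,Fr)$ and $h=r$, precisely the instantiation you spell out. All three verifications you list are the right ones and hold as stated.
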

\begin{lemma}[\agdaref{Coalgebra.Recursive}{iso-recursive⇒initial}{}, {\cite[Prop.\ 7(a)]{CaprettaUV06}}]
  \label{iso:recursive:initial}
  If the structure $r\colon R\to FR$ of a recursive coalgebra is an
  isomorphism, then $r^{-1}\colon FR\to R$ is the initial algebra.
\end{lemma}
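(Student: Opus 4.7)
The plan is to show directly that $(R, r^{-1})$ satisfies the universal property of the initial $F$-algebra by using the universal property of the recursive coalgebra $(R, r)$ and the fact that $r$ is an isomorphism to translate between coalgebra-to-algebra morphisms and algebra morphisms.

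First, I fix an arbitrary $F$-algebra $(A, a)$ and aim to construct a unique algebra morphism $h\colon (R, r^{-1}) \to (A, a)$. By recursiveness of $(R, r)$, there exists a unique coalgebra-to-algebra morphism $h\colon R \to A$, that is, a $\C$-morphism satisfying
\[
  h = a \circ Fh \circ r.
\]
Since $r$ is an isomorphism, I may compose both sides on the right with $r^{-1}$, obtaining $h \circ r^{-1} = a \circ Fh$. This is exactly the commutativity condition for $h$ being an $F$-algebra morphism from $(R, r^{-1})$ to $(A, a)$ (cf.\ the right-hand square in diagram~\eqref{diag:mor}).

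For uniqueness, suppose $h'\colon R \to A$ is any algebra morphism from $(R, r^{-1})$ to $(A, a)$, so $h' \circ r^{-1} = a \circ Fh'$. Composing on the right with $r$ yields $h' = a \circ Fh' \circ r$, which says precisely that $h'$ is a coalgebra-to-algebra morphism from $(R, r)$ to $(A, a)$. By recursiveness of $(R, r)$, such a morphism is unique, so $h' = h$.

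There is no real obstacle here: the proof is a direct reformulation, using only that $r$ is an iso, of the equation characterizing coalgebra-to-algebra morphisms as the equation characterizing algebra morphisms for the inverted structure. The only minor care needed in the Agda formalization will be to thread the inverse isomorphism through the equational reasoning on both the existence and uniqueness sides, ensuring that pre- and postcomposition with $r$ and $r^{-1}$ are applied in a way that mirrors the paper-level equivalence.
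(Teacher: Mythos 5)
Your proof is correct and is exactly the standard direct argument: the paper gives no written proof for this lemma (deferring to Capretta et al.\ and the Agda formalization), and your translation of the coalgebra-to-algebra equation $h = a\circ Fh\circ r$ into the algebra-morphism square $h\circ r^{-1} = a\circ Fh$ via pre-/postcomposition with $r^{-1}$ is precisely the intended reasoning, for both existence and uniqueness. Nothing is missing.
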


\section{Initial Algebra Theorem}\label{sec:main}

Our main result (\autoref{cor:main}) states that for every accessible endofunctor $F$ on a locally presentable category $\C$ the initial algebra can be constructed as the colimit of all sufficiently `small' recursive coalgebras;
`small' here means that the carriers of those recursive coalgebra are $\lambda$-presentable, where $\lambda$ is a regular cardinal such that~$\C$ is locally $\lambda$-presentable and $F$ is $\lambda$-accessible.

Since in our Agda formalization the treatment of the cardinal number
$\lambda$ is problematic we work in a slightly generalized setting:
\begin{definition}[\agdaref{Accessible-Category}{Accessible}{}]\label{D:Fil-acc}
  Let $\Fil$ be a collection\footnote{In Agda, $\Fil$ is realized as a predicate; in classic set theory, it suffices to consider $\Fil$ as a class of small, filtered categories.} of filtered
  categories.

  \begin{enumerate}
  \item An object $X\in \C$ is (\Fil-)\emph{presentable} if its hom functor
    $\C(X,-)$ preserves colimits of diagrams
    $D\colon \D\to \C$ with $\D \in \Fil$.

  \item\label{D:Fil-acc:2} A category $\C$ is
    \emph{\Fil-accessible}
    provided that
    \begin{enumerate}
    \item there is a set $\C_\fp\subseteq \C$ of (\Fil-)presentable objects,
    \item for all $X\in \C$, the slice category $(\C_\fp/X)$ lies in $\Fil$,
    \item\label{i:canColimit} every object $X\in \C$ is the colimit
      of $\UX\colon \C_\fp/X \to \C$ (\autoref{D:U}).
    \end{enumerate}
  \end{enumerate}
\end{definition}
\begin{figure*}[t]
  \centering%
\begin{tikzpicture}[mynode/.style={
  align=center,
  draw=black,
  rounded corners=2pt,
},
mypath/.style={
  ->,
  line width=1pt,
  draw=black!80,
  rounded corners=2pt,
  shorten <= 2pt,
  shorten >= 2pt,
},
x=4.5cm,
y=3cm,
]
  \node[mynode] (diagram) at (0.2,0) {$\Coalg_{\finrec}$:\\
  Diagram of \\
  all recursive $(R,r)$ \\
  with $R\in \C_\fp$};
  \node[mynode] (Aa) at (0.9,1) {Locally finrec \\
  coalgebra $(A,\alpha)$};
  \path[mypath] (diagram) |- node[above] {Colimit} (Aa);
  \node[mynode] (inj) at (0.9,0) {Colimit Injection: \\
  it is the unique \\
  $h\colon (R,r) \to (A,\alpha)$ \\
  for every \\
  $(R,r)\in \Coalg_{\finrec}$};
  \path[mypath] (Aa) -- node[left] {\autoref{unique-proj}} (inj);

  \node[mynode] at (2,0) (ump) {Universal Property:\\
  there is a unique \\
  $(R,r) \overset{\exists!}{\longrightarrow} (A,\alpha)$ \\
  for all (locally) \\
  finrec $(R,r)$
  };
  \path[mypath] (inj) --
    node[above,yshift=1mm,] {\autoref{finrec-ump}}
    node[below,yshift=-1mm,] {\autoref{locally-finrec-ump}}
    (ump);

  \node[mynode] at (2,1) (iterate) {$(FA,F\alpha)$ is \\
  also a locally \\
  finrec coalgebra};
  \path[mypath] (Aa) -- node[above] {\autoref{iterate-locally-finrec}} (iterate);

  \node[mynode] (inverse) at (2.8,1) {There exists\\
    $h\colon (FA,F\alpha) \to (A,\alpha)$
  };

  \path[mypath] (iterate) -- (inverse);
  \path[mypath] (ump) -- (inverse.south west);
  \node[mynode] (uniq-endo) at (inverse |- ump) {The identity $\id_A$ \\
    is the only \\
    endomorphism \\
    $(A,\alpha) \to (A,\alpha)$
  };
  \path[mypath] (ump) -- (uniq-endo);

  \node[mynode] (iso) at (3.6,0.5) {Fixpoint\\
  $A\cong FA$};
  \path[mypath] (inverse) |-
        node[above,yshift=0mm,anchor=south west,pos=0.52,align=center]
            {Lambek's Lemma\\ (\autoref{lambek})}
            ([yshift=1mm]iso.west);
  \path[mypath] (uniq-endo) |- ([yshift=-2mm]iso.west);

  \node[mynode] (initial) at ([xshift=-2mm,yshift=-2mm]iso |- ump) {%
    Initial Algebra\\
    $(A,\alpha^{-1})$};
  \path[mypath] (iso.south) -- node[left] {\autoref{iso:recursive:initial}} (initial.north -| iso.south);
\end{tikzpicture}
  \caption{Roadmap for the initial algebra theorem (\autoref{main:thm})}
  \label{fig:roadmap}
\end{figure*}

\begin{remark}
  \begin{enumerate}
  \item \autoref{D:Fil-acc} is very similar to an instance of an $(\mathbb{I}, \mathcal M)$-accessible category in the sense of Urbat~\cite[Def.~3.4]{Urbat17}.
    His notion is parametric in a collection $\mathbb{I}$ of sifted diagrams and the class $\mathcal M$ of an $(\mathcal E, \mathcal M)$-factorization system which $\C$ is equipped with.
    Modulo the assumption on existence of colimits our notion is an instance of his by taking $\mathbb I = \Fil$ and the trivial factorization system $(\text{iso\-mor\-phisms},\text{all mor\-phisms})$. 

  \item \autoref{D:Fil-acc} is also reminiscent of the notion of a $\mathbb{D}$-accessible category introduced by Ad\'amek et al.~\cite[Def.~3.4]{AdamekEA02}.
    The parameter~$\mathbb{D}$ is a \emph{limit doctrine}, that is, an essentially small collection of small categories.
    They consider $\mathbb D$-filtered colimits, which are colimits that commute in $\Set$ with all limits with a diagram scheme in~$\mathbb D$.
    For example, for $\mathbb D$ consisting of all finite categories, one obtains filtered colimits, and for $\mathbb D$ consisting of all finite discrete categories, one obtains sifted colimits.
    The precise relationship of this notion to ours (or Urbat's notion) is subject to further study.

  \item Unlike the classical definition of a locally presentable category (and the generalized notions in the previous two items) our definition explicitly mentions the canonical diagram $\UX$, and we explicitly require that $\C_\fp/X$ lies in $\Fil$. The reason for this is that our proofs use that the hom functor of a presentable object preserves the colimit in \autoref{D:Fil-acc}.(2c).%
    \smnote[inline]{}
  \end{enumerate}
\end{remark}

\begin{example}\label{E:lfp}
  Every lfp category $\C$ is $\Fil$-accessible for $\Fil$ being the class of all filtered categories.
\end{example}
\begin{example}[\agdaref{Setoids-Accessible}{Setoids-Accessible}{}]
  For the same collection $\Fil$, the category $\Set$ is $\Fil$-accessible, with $\C_\fp$ being the category of all finite ordinals $\set{0,\ldots,k-1}$ for $k\in \N$ and all maps between them.
\end{example}
We have chosen to work with $\Fil$-accessible categories because their definition does not
explicitly mention cardinal numbers. However, besides lfp categories, they subsume the more general notion of a locally $\lambda$-presentable category~\cite{adamek1994locally}, for a regular cardinal $\lambda$:
\begin{example}\label{E:llp}
  Let $\lambda$ be a regular cardinal, and recall from op.~cit.~that a category $\D$ is \emph{$\lambda$-filtered} if every set of morphsism of size less than~$\lambda$ has a cocone in $\D$.
  Equivalently:
  \begin{enumerate}
  \item Every set of less than $\lambda$ objects has a cocone, and
  \item every family of less than $\lambda$ parallel morphisms $f_i\colon A\to B$ ($i\in I$ with $|I| < \lambda$) has a coequalizing morphism $g\colon B\to C$ (that is, $g\circ f_i\colon A\to C$ is independent of $i\in I$).
  \end{enumerate}
  A diagram $D\colon \D \to \C$ is \emph{$\lambda$-filtered} if so is its diagram scheme~$\D$, and a \emph{$\lambda$-filtered} colimit is a colimit of a $\lambda$-filtered diagram. A functor is \emph{$\lambda$-accessible} if it preserves $\lambda$-filtered colimits. An object $X$ of a category $\C$ is \emph{$\lambda$-presentable} if its hom-functor $\C(X,-)$ is $\lambda$-accessible. 

  Finally, a category is \emph{locally $\lambda$-presentable} if it is cocomplete and has a set of $\lambda$-presentable objects $\C_\fp$ such that every object of $\C$ is a $\lambda$-filtered colimit of the diagram $\UX\colon \C_\fp/X \to \C$ (cf.~\autoref{D:Fil-acc}).

  For $\Fil$ being the class of all small $\lambda$-filtered categories, every locally $\lambda$-presentable category is $\Fil$-accessible.
\end{example}

Note that like in lfp categories, the set $\C_\fp$ does not necessarily need
to contain all presentable objects (up to isomorphism):
\begin{lemma}[\agdaref{Accessible-Category}{presentable-split-in-fin}{}]\label{presentable-split}
  Every presentable object $X\in \C$ is a split quotient of some object $P\in \C_\fp$.
\end{lemma}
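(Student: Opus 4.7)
The plan is to use the presentability of $X$ together with the canonical colimit presentation of $X$ given by condition~\ref{i:canColimit} of \autoref{D:Fil-acc}.

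First, I would note that by \autoref{D:Fil-acc}, the object $X$ is the colimit of the canonical diagram $\UX\colon \C_\fp/X \to \C$, with colimit injections given by the canonical cocone in~\eqref{sliceCocone}, i.e., $p\colon \UX(P,p) \to X$ for every $(P,p) \in \C_\fp/X$. Crucially, by \autoref{D:Fil-acc}.(2b), the index category $\C_\fp/X$ lies in $\Fil$, so this is a colimit of a diagram in $\Fil$.

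Next, since $X$ is $\Fil$-presentable, the hom-functor $\C(X,-)$ preserves this colimit. I would then apply the factorization part of \autoref{R:fpob}.\ref{R:fpob:1} to the morphism $\id_X\colon X \to X$: there exist an object $(P,p) \in \C_\fp/X$ and a morphism $s\colon X \to P$ such that
\[
  \id_X = p \circ s.
\]
This equation exhibits $p\colon P \to X$ as a split epimorphism with section $s$, where $P \in \C_\fp$ by construction. Hence $X$ is a split quotient of a $\C_\fp$-object, as required.

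There is no real obstacle here, since the lemma is essentially a direct instance of \autoref{R:fpob} applied to the identity. The only point worth double-checking in the Agda formalization is that $(P,p)$ does indeed sit in $\C_\fp/X$ (so that the factorization we obtain is through a colimit injection of the canonical diagram, rather than some auxiliary diagram), but this is immediate from the way $\UX$ is defined in \autoref{D:U}.
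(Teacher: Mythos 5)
Your proof is correct and is the standard (and surely intended) argument: the paper gives no in-text proof and defers to the Agda lemma \texttt{presentable-split-in-fin}, whose content is exactly this factorization of $\id_X$ through a colimit injection of the canonical diagram $\UX$ via \autoref{R:fpob}.\ref{R:fpob:1}, using that $\C_\fp/X$ lies in $\Fil$ and that $X$ is presentable. The resulting equation $p\circ s=\id_X$ with $P\in\C_\fp$ is precisely the claimed split quotient, so nothing is missing.
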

\noindent
That is, there exists a split epimorphism $P \epito X$.

\begin{assumption}\label{ass:main}
  For the remainder of this section, we fix a collection $\Fil$ of filtered categories, $\Fil$-accessible category $\C$ and an endofunctor $F\colon \C \to \C$.
  We also assume that every pair $X,Y$ of presentable objects has a coproduct $X+Y$. 
\end{assumption}
It then follows that $X+Y$ is presentable, too.
\begin{lemma}[\agdaref{Presentable}{presentable-coproduct}{}] \label{coproductPresentable}
  Presentable objects are closed under binary coproduct in $\C$.
\end{lemma}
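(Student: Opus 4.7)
The plan is to exploit the natural isomorphism $\C(X+Y,Z) \cong \C(X,Z) \times \C(Y,Z)$ induced by the universal property of the coproduct. This reduces the problem to showing that, for every diagram $D\colon \D \to \C$ with $\D \in \Fil$ and colimit $(c_i\colon Di \to C)_{i \in \D}$, the cocone of hom-sets on $X+Y$ is a colimit in $\Set$. Since $X$ and $Y$ are both presentable, the cocones $(\C(X,c_i))_i$ and $(\C(Y,c_i))_i$ are colimits in $\Set$, and the statement amounts to the fact that filtered colimits commute with binary products in $\Set$.

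For the formalization, rather than invoking that general commutativity lemma, I would unfold it via the essentially unique factorization criterion of \autoref{R:fpob}, which directly matches the hypothesis that $\D$ is filtered. Given $f\colon X+Y \to C$, set $f_X := f \circ \inl$ and $f_Y := f \circ \inr$. Presentability of $X$ yields some $i \in \D$ and $g_X\colon X \to Di$ with $c_i \circ g_X = f_X$; presentability of $Y$ yields some $j \in \D$ and $g_Y\colon Y \to Dj$ with $c_j \circ g_Y = f_Y$. Filteredness of $\D$ provides an upper bound $k$ with morphisms $u\colon i \to k$ and $v\colon j \to k$, and then $[Du \circ g_X,\,Dv \circ g_Y]\colon X+Y \to Dk$ is the desired factorization of $f$ through $c_k$, as one verifies by precomposing with $\inl$ and $\inr$ and using that $c_k \circ Du = c_i$ and $c_k \circ Dv = c_j$.

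For essential uniqueness, given two factorizations $c_k \circ p = c_k \circ q\colon X+Y \to C$, I precompose once with $\inl$ and once with $\inr$ to obtain two pairs of morphisms into $Dk$ that become equal after $c_k$. The essential uniqueness clauses of \autoref{R:fpob} applied to $X$ and to $Y$ produce morphisms $h_X\colon k \to k_X$ and $h_Y\colon k \to k_Y$ in $\D$ that equalize the two components. One more application of filteredness merges $h_X$ and $h_Y$ into a single morphism $h\colon k \to \ell$ satisfying $Dh \circ p = Dh \circ q$, by first taking an upper bound of $k_X$ and $k_Y$ and then equalizing the two resulting parallel pairs.

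The main obstacle is the bookkeeping in the essential-uniqueness step: one has to invoke the filteredness axioms (upper bounds and coequalization of parallel pairs) twice to combine the data coming separately from $X$ and $Y$ into a single morphism of $\D$. None of the steps is deep, but the Agda formalization will need the auxiliary lemmas about filtered categories already present in the library, together with the coproduct calculation $[\mathord{-},\mathord{-}] \circ \inl = \id, [\mathord{-},\mathord{-}] \circ \inr = \id$.
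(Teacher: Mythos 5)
Your proof is correct and follows the same route as the paper, whose proof is only a one-sentence sketch saying precisely this: use filteredness to lift the essentially-unique factorization property of \autoref{R:fpob} from $X$ and $Y$ to $X+Y$. Your write-up fills in the details the paper omits (the copairing $[Du\circ g_X, Dv\circ g_Y]$ and the double use of upper bounds plus coequalization in the uniqueness step), and is accurate apart from the harmless slip of calling the single parallel pair $k\to\ell'$ \textqt{two parallel pairs}.
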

\begin{proof}[Proofsketch]
  Similarly as in the setting of lfp categories, one uses filteredness (implied by
  $\Fil$) in order to lift the desired factorization property from presentable
  objects $X$ and $Y$ to $X+Y$.
\end{proof}

\begin{definition}\label{deffinrec}
  A coalgebra $(C,c)$ is 
  \begin{enumerate}%
  \item\label{i:finrec} \emph{finite-recursive} (\emph{finrec}) if it is recursive and $C$ presentable\twnote{} (\agdarefcustom{\itemref{deffinrec}{i:finrec}}{Iterate.Assumptions}{FiniteRecursive}{A record with a parameter (the coalgebra) and two members: 1.\ the carrier is presentable, 2.\ the coalgebra is recursive.}).
  \item \emph{locally finrec} if it is the colimit of finrec coalgebras (\agdarefcustom{\itemref{deffinrec}{i:localfinrec}}{CoalgColim}{}{\texttt{CoalgColim} describes a coalgebra that occurs as the colimit of coalgebras all satisfying a certain property. This property is then instantiated to \texttt{FiniteRecursive} in the main construction.}).
    This means that there is a diagram $D\colon \D\to \Coalg$ and a
    cocone $(Di\to (C,c))_{i\in \D}$ such that%
    \label{i:localfinrec}
    \begin{enumerate}
    \item $Di$ is finrec for every $i\in \D$,
    \item\label{i:carrier-colim} $(VDi\colon VDi\to C)_{i\in \D}$ is a
    colimit in $\C$ (where $V\colon \Coalg\to \C$ is the forgetful functor).
    \end{enumerate}
  \end{enumerate}
\end{definition}

Informally speaking, a finrec coalgebra is a system with transitions of type $F$ that is free of cycles and in which every element has a finite description.
This is precisely the intuition
of the elements of the initial algebra.
Consequently, our main theorem characterizes the initial
algebra (considered as a coalgebra) as the colimit $A$ of \emph{all} finrec
coalgebras. While we obtain the coalgebra structure $A\to FA$ directly from
colimit creation (\autoref{createCoalgColim}), we need some non-trivial results
about locally finrec coalgebras in order to construct its inverse viz.~the algebra structure
$FA\to A$. Note that as soon as we have established an
isomorphism $A\cong FA$, it is the initial algebra (by
\autoref{iso:recursive:initial}).

The way we obtain the desired isomorphism is very similar to the argument of
Lambek's famous lemma~\cite{lambek}. In dual form this argument can be stated as follows:
\begin{lemma}[\agdaref{Lambek}{lambek}{The proof is surprisingly short and readable.}]%
  \label{lambek}
  The structure $c\colon C\to FC$ of a coalgebra $(C,c)$ is an isomorphism provided that:
  \begin{enumerate}[topsep=0pt,left=0pt,midpenalty=99]
  \item\label{lambek:ex} there is a coalgebra morphism $h\colon (FC,Fc)\to (C,c)$,
  \item\label{lambek:uniq} there is at most one coalgebra morphism $(C,c)\to (C,c)$.
  \end{enumerate}
\end{lemma}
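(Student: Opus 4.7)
The plan is to mirror the standard argument for Lambek's lemma, exploiting only the two hypotheses and functoriality, and to show that the given $h$ is a two-sided inverse of $c$.

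First I would observe that $c$ itself is a coalgebra morphism $c\colon (C,c)\to (FC,Fc)$, since the required square $Fc\circ c = Fc\circ c$ is trivial. Composing with the hypothesis morphism $h\colon (FC,Fc)\to (C,c)$ therefore yields a coalgebra endomorphism $h\circ c\colon (C,c)\to (C,c)$. Explicitly, using that $h$ is a coalgebra morphism (i.e.\ $c\circ h = Fh\circ Fc$), one checks
\[
  c\circ (h\circ c) \;=\; (c\circ h)\circ c \;=\; Fh\circ Fc\circ c \;=\; F(h\circ c)\circ c,
\]
so $h\circ c$ is indeed an endomorphism of $(C,c)$.

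Next, by hypothesis~\ref{lambek:uniq} there is at most one such endomorphism. Since $\id_C$ is always a coalgebra endomorphism, the uniqueness forces $h\circ c = \id_C$. Then, applying $F$ and using functoriality gives $Fh\circ Fc = F(h\circ c) = \id_{FC}$, and combining this with the coalgebra morphism equation for $h$ yields
\[
  c\circ h \;=\; Fh\circ Fc \;=\; \id_{FC}.
\]
Together with $h\circ c = \id_C$, this exhibits $h$ as a two-sided inverse of $c$, so $c$ is an isomorphism.

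There is no genuine obstacle here; the only subtle point is keeping track of in which direction the various maps are coalgebra morphisms, and noticing that the inverse of $c$ is forced to be the given $h$ rather than having to be constructed separately. The argument uses assumption~\ref{lambek:ex} exactly once (to obtain $h$ and the defining equation $c\circ h=Fh\circ Fc$) and assumption~\ref{lambek:uniq} exactly once (to collapse $h\circ c$ with $\id_C$), which matches the minimality one expects from a Lambek-style proof.
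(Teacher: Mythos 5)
Your proof is correct and follows essentially the same route as the paper's: view $c$ as a coalgebra morphism $(C,c)\to(FC,Fc)$, compose with $h$ to get an endomorphism that uniqueness forces to be $\id_C$, then compute $c\circ h = Fh\circ Fc = F(h\circ c) = \id_{FC}$. The only difference is that you spell out explicitly that $h\circ c$ is a coalgebra endomorphism, which the paper leaves implicit.
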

\noindent
The Agda formalization of the proof is quite easy to read;
in fact, this is a good place to start delving into our Agda code. 
\begin{proof}
  First note that the coalgebra structure is a coalgebra morphism
  $c\colon (C, c) \to (FC, Fc)$. Composing it with the coalgebra
  morphism $h\colon (FC, Fc) \to (C,c)$ according
  to~\autoref{lambek:ex} yields an endomorphism on $(C,c)$, which must
  be the identity by~\autoref{lambek:uniq}: $h \circ c = \id_C$. In
  the following computation we use this, functoriality of $F$, and
  that $h$ is a coalgebra morphism to conclude that it is the inverse
  of $c$:
  \[
    c \circ h = Fh \circ Fc = F(h \circ c) = F\id_C = \id_{FC}.
    \tag*{\qedhere}
  \]
\end{proof}
\begin{remark}
  Note that the (dual of the) original Lemma assumes that $(C,c)$ is
  terminal to establish the two conditions above.
\end{remark}

In order to verify the two condition of \autoref{lambek} for our
coalgebra $A \to FA$ we will develop elements of the theory of locally
finrec coalgebras. A roadmap of the major proof steps is visualized in
\autoref{fig:roadmap}.

\subsection{Applying $F$ to Locally Finrec Coalgebras}
\label{sec:iterate}%
At the heart of Lambek's lemma lies the observation that for every
coalgebra $c\colon C\to FC$ application of $F$ to the coalgebra
structure yields the coalgebra $Fc\colon FC\to FFC$.
This still holds for recursive coalgebras: If $(R,r)$ is
recursive, then so is $(FR,Fr)$ (\autoref{iterate-recursive}).
However, for finrec coalgebras this does not hold, not even for very
simple examples of set functors. For example, for the set functor
$FX=\N\times X$, the coalgebra $(FC, Fc)$ does not have a finite carrier unless
$C$ is empty. 

Generalizing to locally finrec coalgebras, we recover the ability to
apply $F$: if $(C,c)$ is locally finrec, then so is $(FC, Fc)$ provided
that $F$ preserves the colimit of a diagram of finrec coalgebra which
forms $(C,c)$. To stay in line with the Agda formalization, we
allow~$\D$ to be a large category.
\begin{assumption}[\agdaref{Iterate.ProofGlobals}{ProofGlobals}{We maintain a record of all running assumptions such that we can fill the namespace with one line (\texttt{open ProofGlobals}).}]
  \label{ass:iterate}
  We fix a locally finrec coalgebra $(A,\alpha)$ and a (potentially large)
  diagram $D\colon \D\to \Coalg$ of finrec coalgebras whose colimit is
  $(A,\alpha)$. We assume that $\D$ lies in $\Fil$, and that $F$
  preserves the colimit of the diagram $V\circ D\colon \D\to \C$. We write
  $(X_i,x_i)$ for the finrec coalgebra $Di$ and
  $\pi_i\colon (X_i,x_i) \to (A,\alpha)$ for the colimit injection
  (in $\Coalg$) for every $i\in \D$.
\end{assumption}
\begin{remark}
  The names $A$ and $\alpha\colon A\to FA$ are the members of the coalgebra
  record in the Agda's categories library. For our fixed locally finrec
  coalgebra we keep these names in order to stay
  as close as possible to the formalized proof, while all other
  coalgebras have roman letters as structures. 
\end{remark}

For an accessible functor $F$ on a locally presentable category $\C$
\renewcommand{\exampleautorefname}{Examples}
(\autoref{E:lfp} and~\ref{E:llp}),
\renewcommand{\exampleautorefname}{Example}
\autoref{ass:iterate} is satisfied: the diagram~$D$ may be taken as
the canonical projection
\[
  \Coalg_{\finrec}/(A,\alpha) \to \Coalg.
\]

In order to show that $(FA,F\alpha)$ is locally finrec again, we shall
present a diagram $E\colon \E\to \Coalg$ of finrec coalgebras
whose colimit is $(FA,F\alpha)$. For that we use that the object $FA$
is the colimit of the canonical diagram
\( \UFA \colon \C_\fp/FA\longrightarrow FA.  \)
(\itemref{D:Fil-acc}{i:canColimit}). The objects of $\C_\fp/FA$ are
pairs $(P,p)$ consisting of a presentable object $P\in \C_\fp$ and a
morphism $p\colon P\to FA$. The objects of the desired diagram scheme
$\E$
are commuting triangles with one side in $\C_\fp/FA$ and the
other side coming from a colimit injection of $D$, which we denote by $\E_0$:
\[
  \E_0 := \set{ (P,p,i,p') \mid
    \text{$(P,p) \in \C_\fp/FA$,
      $i\in \D$,
    $F\pi_i\circ p' = p$} }
  \tag{\agdarefcustom{Definition $\E_0$}{Iterate.FiniteSubcoalgebra}{ℰ₀}{}}
\]
\begin{equation}\label{eq:tri}
  (P,p,i,p') \in \E_0
  \qquad
  \hat{=}
  \qquad
  \begin{tikzcd}%
    |[alias=top]|
    P
    \arrow{r}{p}
    \arrow{dr}[swap]{p'}
    & FA
    \\
    & FX_i
    \arrow{u}[swap]{F\pi_i}
  \end{tikzcd}
\end{equation}
The elements of $\E_0$ relate the colimits of $D$ and
$\UFA$: the morphism $p$ is the colimit injection of $(P,p)$ to $FA$ and
$\pi_i$ is the colimit injection of $(X_i,x_i)$ to $(A,\alpha)$. The
collection $\E_0$ is only as large as $\D$: if $\D$ is a small
category, then $\E_0$ is a set.
\begin{lemma} \label{P-to-triangle}
  Every object of $\C_\fp/FA$ extends to an essentially unique object of $\E_0$:
  \begin{enumerate}[left=0pt]
  \item\label{i:tri-ex} For every $(P,p) \in \C_\fp/FA$ there exist $i\in \D$ and $p'\colon P\to FX_i$ 
    such that $(P,p,i,p') \in \E_0$. \hfill
    (\agdarefcustom{\itemref{P-to-triangle}{i:tri-ex}}{Iterate.FiniteSubcoalgebra}{P-to-triangle}{})
  \item\label{i:tri-uniq} For all $(P,p,i,p')\in \E_0$ and $p''\colon P\to FX_i$ with $F\pi_i\circ p'' = p$, there exist $j\in \D$ and a $\D$-morphism $i\to j$ making $p'$, $p''$ equal:
  \[
  \begin{tikzcd}[column sep=15mm]
    P
    \arrow[shift left=2]{r}{p'}
    \arrow[shift right=2]{r}[swap]{p''}
    & FX_i
    \arrow{r}{FD(i\to j)}
    & FX_j.
  \end{tikzcd}
  \tag*{(\agdarefcustom{\itemref{P-to-triangle}{i:tri-uniq}}{Iterate.FiniteSubcoalgebra}{CC.p'-unique}{})}
  \]
  \end{enumerate}
\end{lemma}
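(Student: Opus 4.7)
The plan is to apply the hom-filtered-colimit characterization (\autoref{R:fpob}) to the colimit that $F$ produces from the diagram of carriers of finrec coalgebras.

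The setup is as follows. By \autoref{ass:iterate}, the cocone $(V\pi_i\colon VDi\to A)_{i\in\D}$ is a colimit of $V\circ D$ and lies in $\Fil$, and $F$ preserves this colimit. Hence $(F\pi_i\colon FX_i\to FA)_{i\in \D}$ is a filtered colimit in $\C$ with diagram scheme $\D\in \Fil$. Moreover, $P\in \C_\fp$ is presentable, so its hom-functor $\C(P,-)$ preserves colimits of diagrams whose scheme lies in $\Fil$. Therefore, \autoref{R:fpob} applies to the morphism $p\colon P\to FA$ with respect to the colimit cocone $(F\pi_i)_{i\in\D}$.

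For part \ref{i:tri-ex}, \itemref{R:fpob}{R:fpob:1} directly yields some $i\in \D$ and $p'\colon P\to FX_i$ with $F\pi_i\circ p' = p$, so $(P,p,i,p')\in\E_0$. For part \ref{i:tri-uniq}, given two factorizations $F\pi_i\circ p' = F\pi_i\circ p'' = p$ through the same index $i$, \itemref{R:fpob}{R:fpob:2} provides a morphism $h\colon i\to j$ in $\D$ such that $FVDh\circ p' = FVDh\circ p''$. Since $Dh$ is a coalgebra morphism, $VDh$ is the underlying $\C$-morphism of $D(i\to j)$, and applying $F$ gives exactly the desired equalizer morphism $FD(i\to j)\colon FX_i\to FX_j$.

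I expect no serious obstacle here, since the statement is essentially a packaging of the universal property of the preserved colimit $FA$ relative to the presentable object $P$. The only point that requires care is to ensure that $\D\in\Fil$ and that $F$ preserves the colimit of $V\circ D$ (both supplied by \autoref{ass:iterate}), so that \autoref{R:fpob} is indeed applicable. In the Agda formalization, the minor subtlety is that $\D$ is allowed to be large, so one must invoke the hom-filtered-colimit characterization in a version that tolerates large diagram schemes; conceptually, however, the argument is a one-step appeal to presentability.
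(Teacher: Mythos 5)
Your proposal is correct and follows exactly the paper's argument: both use that $F$ preserves the colimit of $V\circ D$ to conclude that $(F\pi_i\colon FX_i\to FA)_{i\in\D}$ is a colimit of a diagram with scheme $\D\in\Fil$, and then apply \autoref{R:fpob} to $p\colon P\to FA$ using presentability of $P$, with parts \ref{i:tri-ex} and \ref{i:tri-uniq} given by \itemref{R:fpob}{R:fpob:1} and \itemref{R:fpob}{R:fpob:2} respectively. The paper's proof is merely a terser statement of the same two-step appeal.
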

\begin{proof}
   Since $F$ preserves the colimit of $V \circ D\colon \D \to \C$ we
   know that $F\pi_i\colon FX_i \to FA$ ($i \in \D$) is a
   colimit. Now  apply \autoref{R:fpob} to this colimit and any given
   $(P,p)$ in $\C_\fp/FA$.
\end{proof}

Every object $t = (P,p,i,p') \in \E_0$ induces a finrec coalgebra:
since $P$ and $X_i$ are in $\C_\fp$, their coproduct $P+X_i$ exists
(\autoref{ass:main})
and is presentable (\autoref{coproductPresentable}). On $P+X_i$, we define the
following coalgebra structure
\[
  \gen{t}
  := \big(
  \begin{tikzcd}[column sep=14mm,cramped]
  P+X_i
  \arrow{r}{[p',x_i]}
  &[5pt] FX_i
  \arrow{r}{F\inr}
  & F(P+X_i)
  \end{tikzcd}
  \big)
\]
With \itemref{P-to-triangle}{i:tri-ex}, this coalgebra structure can be
understood to be generated by $p\colon P\to FA$. We put
\[
  E_0(t):= (P+X_i, \gen{t}) \in \Coalg.
\]
Moreover, we have a canonical coalgebra morphism
\[
  \inj_t := [p,\alpha\circ\pi_i]\colon  E_0(t) \longrightarrow (FA, F\alpha);
  \tag{\agdarefcustom{\ensuremath{\inj_t}}{Iterate.FiniteSubcoalgebra}{hom-to-FA}{Also see \texttt{hom-to-FA-i₁} and \texttt{hom-to-FA-i₂}}}
\]
indeed, consider the following diagram:
\[
  \begin{tikzcd}[column sep = 30]
    P + X_i
    \ar{r}{[p',x_i]}
    \ar{d}[swap]{[p,\alpha \circ \pi_i]}
    \ar[shiftarr = {yshift = 15}]{rr}{\gen{t}}
    &
    FX_i
    \ar{r}{F\inr}
    \ar{ld}[inner sep = 0]{F\pi_i}
    \ar{rd}[swap,inner sep = 0]{F(\alpha \circ \pi_i)}
    &
    F(P+X_i)
    \ar{d}{F[p,\alpha\circ \pi_i]}
    \\
    FA
    \ar{rr}{F\alpha}
    &&
    FFA
  \end{tikzcd}
\]
Its right-hand and middle parts commute due to functoriality of~$F$,
and for the left-hand part consider the coproduct components
separately: the left-hand one commutes by~\eqref{eq:tri} and the
right-hand one since $\pi_i$ is a coalgebra morphisms from $(X_i,x_i)$
to $(A,\alpha)$.

In order to turn $\E_0$ into the category $\E$ and $E_0$ into an appropriate diagram $\E\to
\Coalg$, the morphisms need
to be carefully chosen:
\begin{definition}[\agdaref{Iterate.DiagramScheme}{ℰ}{}]%
  \begingroup%
  \def\labelmarginpar#1{}%
  \label{D:E}%
  \endgroup%
  \begin{enumerate}
  \item The diagram scheme $\E$ is the full subcategory of
    $\Coalg/(FA,F\alpha)$ given by the coalgebra morphisms
    $\inj_t\colon E_0 t \to (FA,F\alpha)$ ($t\in \E_0$). In more
    detail:
    \begin{itemize}
    \item the objects of $\E$ are the elements of $\E_0$, and
    \item for $t_1,t_2\in \E_0$, the hom set $\E(t_1,t_2)$ consists of
      those coalgebra morphisms $h\colon E_0 (t_1)\to E_0 (t_2)$ such that
      $\inj_{t_2} \circ h = \inj_{t_1}$:
      \[
        \begin{tikzcd}[inj/.style={sloped,above,pos=0.35}]
          &[6mm]
          P+X_i
          \arrow{dd}{h}
          \arrow{r}{\gen{t_1}}
          \arrow{dl}[inj]{\inj_{t_1} =}[inj,below]{[p,\alpha\circ \pi_i]}
          & F(P+X_i)
          \arrow{dd}{Fh}
          \\
          FA
          \\
          &Q+X_j
          \arrow{r}{\gen{t_2}}
          \arrow{ul}[inj]{\inj_{t_2} =}[inj,below]{[q,\alpha\circ \pi_j]}
          & F(Q+X_j)
        \end{tikzcd}
        \text{for }
        \begin{array}{r@{}l}
        t_1 &=(P,p,i,p') \\
        t_2 &=(Q,q,j,q')
        \end{array}
      \]
    \end{itemize}

  \item The diagram $E\colon \E\to \Coalg$ is defined by%
    \[
      E(P,p,i,p') = (P+X_i, \gen{P,p,i,p'}), \qquad E(h) = h.
    \]
  \end{enumerate}
\end{definition}
Note that we do not need that $\E$ is filtered (or lies in $\Fil$); by
definition every colimit of finrec coalgebras is locally finrec. 
\begin{remark}
  \begin{enumerate}
  \item We obtain a canonical cocone on $E$ given by
    \[
      \inj_t\colon E(t)\to (FA,F\alpha)\qquad
      \text{for every $t\in \E_0$}.
    \]
  \item Note that it is important that $\E(t_1,t_2)$ contains
    \emph{all} coalgebra morphisms of type
    $h\colon P_1+X_{i_1}\to P_2+X_{i_2}$ and not just coproducts
    $h_\ell + h_r\colon P_1+X_{i_1} \to P_2+ X_{i_2}$. For in the
    latter case the colimit of $E$ would simply be the
    coproduct $FA + A$.
  \end{enumerate}
\end{remark}

All objects in the diagram $E$ are indeed recursive:
\begin{proposition}[\agdaref{Iterate.FiniteSubcoalgebra}{P+X-coalg-is-FiniteRecursive}{}]\label{allDiagRecursive}%
  For every object $t\in \E_0$, the coalgebra~$E(t)$ is recursive.
\end{proposition}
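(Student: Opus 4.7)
The plan is to apply the sandwich lemma (\autoref{sandwich-recursive}) with $(R,r) = (X_i, x_i)$ and $(B,b) = E(t) = (P+X_i, \gen{t})$. Since $(X_i, x_i) = Di$ is finrec by assumption, it is in particular recursive, so the hypothesis of \autoref{sandwich-recursive} will be met as soon as we exhibit suitable coalgebra morphisms
\[
  h\colon (X_i, x_i) \to (P+X_i, \gen{t})
  \quad\text{and}\quad
  g\colon (P+X_i, \gen{t}) \to (FX_i, Fx_i)
\]
with $\gen{t} = Fh \circ g$.

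The natural candidates come for free from the very definition of $\gen{t}$: take $h := \inr\colon X_i \to P+X_i$ and $g := [p', x_i]\colon P+X_i \to FX_i$. The sandwich equation is then immediate from the definition $\gen{t} = F\inr \circ [p', x_i]$. To check that $\inr$ is a coalgebra morphism from $(X_i, x_i)$ to $(P+X_i, \gen{t})$, I would precompose the definition of $\gen{t}$ with $\inr$ and use the defining property of the copair $[p', x_i] \circ \inr = x_i$. To check that $[p', x_i]$ is a coalgebra morphism from $(P+X_i, \gen{t})$ to $(FX_i, Fx_i)$, the computation
\[
  F[p', x_i] \circ \gen{t}
  = F[p', x_i] \circ F\inr \circ [p', x_i]
  = F([p', x_i] \circ \inr) \circ [p', x_i]
  = Fx_i \circ [p', x_i]
\]
does the job, using functoriality of $F$ and again $[p', x_i] \circ \inr = x_i$.

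Neither step is really an obstacle; the whole argument is a small diagram chase, and the only substantive ingredient is the sandwich lemma. The mild point worth noticing is that the $P$-component $p\colon P \to FA$ and the commuting triangle \eqref{eq:tri} witnessing $t \in \E_0$ play no role whatsoever here: recursivity of $E(t)$ depends only on the fact that $\gen{t}$ factors as $F\inr \circ [p', x_i]$ and on recursivity of $(X_i, x_i)$. In particular, one does not need $F$ to preserve the colimit of $V\circ D$ at this point; that hypothesis will enter later, when assembling the cocone $(\inj_t)_{t\in \E_0}$ into an actual colimit presentation of $(FA, F\alpha)$.
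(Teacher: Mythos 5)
Your proof is correct and coincides with the paper's own argument: both exhibit the coalgebra morphisms $\inr\colon (X_i,x_i)\to(P+X_i,\gen{t})$ and $[p',x_i]\colon(P+X_i,\gen{t})\to(FX_i,Fx_i)$, observe $\gen{t}=F\inr\circ[p',x_i]$, and invoke \autoref{sandwich-recursive}. Your side remarks — that the component $p$ and the triangle \eqref{eq:tri} are irrelevant here and that preservation of the colimit by $F$ only enters later — are also accurate.
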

\begin{proof}
  For $t = (P,p,i,p')$, consider the following diagram:
  \[
    \begin{tikzcd}[column sep=12mm]
      X_i
      \arrow{r}{\inr}
      \arrow{dd}[swap]{x_i}
      \arrow{dr}{x_i}
      &
      P+X_i
      \arrow{r}{[p',x_i]}
      \arrow{d}{[p',x_i]}
      & FX_i
      \arrow{dd}{Fx_i}
      \arrow{dl}{\id}
      \\
      & FX_i
      \arrow{d}{F\inr}
      \arrow{dr}{Fx_i}
      \\
      FX_i
      \arrow{r}[swap]{F\inr}
      &
      F(P+X_i)
      \arrow{r}[swap]{F[p',x_i]}
      & FFX_i
    \end{tikzcd}
  \]
  It clearly commutes so that we obtain two coalgebra morphisms:
  \[
    (X_i,x_i) \xrightarrow{\inr} (P+X_i,\gen{t})
    \quad\text{and}\quad
    (P+X_i,\gen{t}) \xrightarrow{[p',x_i]} (FX_i,Fx_i)
  \]
  By definition, $\gen{t} = F\inr\circ [p',x_i]$,
  so all requirements of
  \autoref{sandwich-recursive} are met and thus $E(t) = (P+X_i,\gen{t})$ is recursive.
\end{proof}

For the verification that $(FA,F\alpha)$ is locally finrec, it remains to show
that the cocone $(\inj_t\colon VE t \to FA)_{t\in \E_0}$ is a colimit in~$\C$ (\itemref{deffinrec}{i:carrier-colim}).
We do so by relating it to the
colimits of $\C_\fp/FA$ and $D\colon \D\to \Coalg$. First a technical
lemma stating that $\D$-morphism can be extended to $\E$-morphisms:
\begin{lemma}[\agdaref{Iterate.DiagramScheme}{coalg-hom-to-ℰ-hom}{}]\label{L:mor}%
  For every $\D$-morphism $f\colon i \to j$ we have an $\E$-morphism
  \(
    \id_P + Df\colon (P,p, i, p') \to (P,p,j, FDf \circ p').
  \)
\end{lemma}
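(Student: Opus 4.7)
The plan is to verify three things in sequence: (i) the proposed target $(P,p,j,FDf\circ p')$ really is an object of $\E_0$; (ii) the $\C$-morphism $\id_P + Df\colon P+X_i \to P+X_j$ is a coalgebra morphism between $E(P,p,i,p')$ and $E(P,p,j,FDf\circ p')$; and (iii) this coalgebra morphism commutes with the canonical injections $\inj_{t_1}$ and $\inj_{t_2}$ into $(FA,F\alpha)$, so that it lives in $\E$ according to \autoref{D:E}. All three reduce to the coalgebra-morphism property of $Df$, namely $x_j\circ Df = FDf\circ x_i$, and to the cocone equation $\pi_j\circ Df = \pi_i$ (since the $\pi_i$ form a cocone for $D$ in $\Coalg$).

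For (i), I would start from $F\pi_i\circ p' = p$, apply $F$ to $\pi_j\circ Df = \pi_i$ to get $F\pi_j\circ FDf = F\pi_i$, and compose with $p'$ to conclude $F\pi_j\circ(FDf\circ p') = p$. This shows $(P,p,j,FDf\circ p')\in \E_0$, making the diagram $E(P,p,j,FDf\circ p')$ defined.

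For (ii), I would unfold the definitions
\[
  \gen{t_1} = F\inr\circ [p',x_i],\qquad
  \gen{t_2} = F\inr\circ [FDf\circ p', x_j],
\]
and compute both sides of the coalgebra-morphism equation on each coproduct component. On the left component $P$, both sides collapse to $F\inr\circ FDf\circ p'$. On the right component $X_i$, the side $\gen{t_2}\circ(\id_P+Df)$ gives $F\inr\circ x_j\circ Df$, which by the coalgebra-morphism property of $Df$ equals $F\inr\circ FDf\circ x_i = F(\inr\circ Df)\circ x_i$; this in turn equals the right-hand side $F(\id_P+Df)\circ \gen{t_1}$ after simplifying $(\id_P+Df)\circ\inr = \inr\circ Df$.

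For (iii), the injections are $\inj_{t_1}=[p,\alpha\circ\pi_i]$ and $\inj_{t_2}=[p,\alpha\circ\pi_j]$, so $\inj_{t_2}\circ(\id_P+Df) = [p,\alpha\circ\pi_j\circ Df] = [p,\alpha\circ\pi_i] = \inj_{t_1}$, again by $\pi_j\circ Df = \pi_i$. There is no real obstacle here; the only thing that requires attention is to keep the bookkeeping of the coproduct pairings straight and to make sure that the two uses of the coalgebra-morphism property of $Df$ (once on the coalgebra structures, once on the colimit cocone) are not conflated. The lemma then follows.
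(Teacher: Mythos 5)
Your proposal is correct and follows essentially the same route as the paper: the paper likewise reduces everything to the facts that $Df$ is a coalgebra morphism ($x_j\circ Df = FDf\circ p'$ on structures) and that $\pi_j\circ Df=\pi_i$, verifying first the coalgebra-morphism square for $\id_P+Df$ and then the triangle over $FA$ with $\inj_{t_1}$ and $\inj_{t_2}$. Your explicit check that $(P,p,j,FDf\circ p')$ is a well-formed object of $\E_0$ is left implicit in the paper but is the same one-line computation, so there is no substantive difference.
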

\begin{proof}
  Indeed, $h = Df\colon (X_i, x_i) \to (X_j, x_j)$ is a coalgebra
  morphism. Hence, the following diagram clearly commutes:
  \[
    \begin{tikzcd}[column sep = 40]
      P + X_i
      \ar{r}{[p',x_i]}
      \ar{d}[swap]{\id + h}
      &
      FX_i
      \ar{r}{F\inr}
      \ar{d}{Fh}
      &
      F(P+X_i)
      \ar{d}{F(\id + h)}
      \\
      P + X_j
      \ar{r}{[Fh \circ p', x_j]}
      &
      FX_j
      \ar{r}{F\inr}
      &
      F(P+X_j)
    \end{tikzcd}
  \]
  From $\pi_j \circ h = \pi_i$ we also obtain the commutative triangle
  \[
    \begin{tikzcd}[row sep = 5, baseline = (B.base)]
      P + X_i
      \ar{rd}{[p,\alpha \circ \pi_i]}
      \ar{dd}[swap]{\id + h}
      \\
      &
      FA
      \\
      |[alias = B]|
      P + X_j
      \ar{ru}[swap]{[p,\alpha\circ \pi_j]}
    \end{tikzcd}
    \tag*{\qedhere}
  \]
\end{proof}

\begin{lemma}[\agdaref{Iterate.Colimit}{cocone-is-triangle-independent}{}]
  \label{cocone-independent}
  For every $(V\circ E)$-cocone $(k_t\colon VE{t}\to K)_{t\in
    \E_0}$, the morphism $k_t$ only depends on $(P,p) \in
  \C_\fp/FA$. That is, for all objects $t_1 = (P,p,i_1,p_1)$ and
  $t_2 = (P,p,i_2,p_2)$  of $\E_0$ the following diagram commutes:
  \[
    \begin{tikzcd}[row sep=0mm, every cell/.append style={}]
      & P+X_{i_1}
      \arrow[bend left=15]{dr}[inner sep=0]{k_{t_1}}
      \\
      P
      \arrow[bend left=15]{ur}{\inl}
      \arrow[bend right=15]{dr}[swap]{\inl}
      && K
      \\
      & P+X_{i_2}
      \arrow[bend right=15]{ur}[swap]{k_{t_2}}
    \end{tikzcd}
  \]
\end{lemma}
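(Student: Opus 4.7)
The plan is to show that any two objects $t_1=(P,p,i_1,p_1)$ and $t_2=(P,p,i_2,p_2)$ sharing the same $(P,p)\in \C_\fp/FA$ can be connected by a zig-zag of $\E$-morphisms whose $\inl$-components are all $\id_P$. Since a cocone $(k_t)$ respects $\E$-morphisms, this will force $k_{t_1}\circ \inl = k_{t_2}\circ \inl$, which is exactly what the lemma demands.

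First, I would use that $\D\in \Fil$ is filtered (\itemref{lfpprelim}{filtered}) to pick an upper bound $j\in \D$ with morphisms $f_k\colon i_k\to j$ for $k=1,2$. Applying \autoref{L:mor} to each $f_k$ yields $\E$-morphisms
\[
  \id_P + Df_k \colon t_k \longrightarrow t_k',
  \qquad \text{where } t_k' := (P,p,j,\, FDf_k\circ p_k).
\]
Both $t_1'$ and $t_2'$ now live over the same index $j\in\D$, and the commutativity required for them to lie in $\E_0$ follows from $\pi_j\circ Df_k = \pi_{i_k}$:
\[
  F\pi_j \circ (FDf_k\circ p_k) = F\pi_{i_k}\circ p_k = p.
\]

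Next I would invoke the essential-uniqueness part \itemref{P-to-triangle}{i:tri-uniq} of \autoref{P-to-triangle} on the two factorizations $FDf_1\circ p_1$ and $FDf_2\circ p_2$ of $p$ through the colimit injection $F\pi_j\colon FX_j\to FA$. This produces some $j'\in\D$ together with a $\D$-morphism $g\colon j\to j'$ satisfying $FDg\circ FDf_1\circ p_1 = FDg\circ FDf_2\circ p_2$. A further application of \autoref{L:mor} (to $g$) gives $\E$-morphisms $\id_P + Dg \colon t_k' \to t^*$ into the common object
\[
  t^* := (P,\,p,\,j',\, FDg\circ FDf_1\circ p_1) = (P,\,p,\,j',\, FDg\circ FDf_2\circ p_2)\in \E_0.
\]
The zig-zag $t_1 \to t_1' \to t^* \leftarrow t_2' \leftarrow t_2$ now consists entirely of $\E$-morphisms whose left summand is $\id_P$. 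Using the cocone equations $k_{t^*}\circ (\id_P + Dg) \circ (\id_P + Df_k) = k_{t_k}$ and precomposing with $\inl\colon P\to P+X_{i_k}$ yields $k_{t_1}\circ \inl = k_{t^*}\circ \inl = k_{t_2}\circ \inl$, as required.

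I expect the main obstacle to be purely bureaucratic rather than conceptual: verifying that the maps $\id_P + Df_k$ and $\id_P + Dg$ really are $\E$-morphisms (not just coalgebra morphisms) requires checking the triangle $\inj_{t_k'} \circ (\id_P + Df_k) = \inj_{t_k}$ in $\Coalg/(FA,F\alpha)$, which is precisely the second commutative triangle built in the proof of \autoref{L:mor}. Once that bookkeeping is in place, the argument is driven entirely by filteredness of $\D$ and the essential-uniqueness clause of \autoref{P-to-triangle}.
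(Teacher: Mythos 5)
Your proposal is correct and follows essentially the same route as the paper: take an upper bound of $i_1,i_2$ by filteredness of $\D$, merge the two resulting factorizations of $p$ via the essential-uniqueness clause \itemref{P-to-triangle}{i:tri-uniq}, lift the $\D$-morphisms to $\E$-morphisms with \autoref{L:mor}, and conclude by the cocone equations precomposed with $\inl$. The only (cosmetic) difference is that you keep the intermediate objects $t_1',t_2'$ explicit as a two-step zig-zag, whereas the paper composes the $\D$-morphisms first and maps $t_1,t_2$ directly into the common object $t_4$.
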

\begin{proof}%
  Using that $\D$ is filtered (\autoref{ass:iterate}), we first take
  an upper bound $i_3$ of $i_1, i_2$ in $\D$, that is, we have
  $\D$-morphisms $h_1\colon i_1 \to i_3$ and $h_2\colon i_2 \to
  i_3$.
  Using them we can extend $i_3$ to an object of $\E_0$ in two ways:
  \[
    \begin{tikzcd}[column sep=14mm]
      P
      \arrow{r}{p}
      \arrow{d}[swap]{p_1}
      \arrow[rounded corners,to path={(\tikztostart.west) -- ++(-4mm,0)
        |- ([yshift=-4mm]\tikztotarget.south) \tikztonodes
        -- (\tikztotarget)}]{dr}[pos=0.25,left]{p_3 :=}
      & FA
      \\
      FX_{i_1}
      \arrow{ur}[description]{F\pi_{i_1}}
      \arrow{r}[swap]{FDh_1}
      & FX_{i_3}
      \arrow{u}[swap]{F\pi_{i_3}}
    \end{tikzcd}
    \begin{tikzcd}[column sep=14mm]
      P
      \arrow{r}{p}
      \arrow{d}[swap]{p_2}
      \arrow[rounded corners,to path={(\tikztostart.west) -- ++(-4mm,0)
        |- ([yshift=-4mm]\tikztotarget.south) \tikztonodes
        -- (\tikztotarget)}]{dr}[pos=0.25,left]{p_3' :=}
      & FA
      \\
      FX_{i_2}
      \arrow{ur}[description]{F\pi_{i_2}}
      \arrow{r}[swap]{FDh_2}
      & FX_{i_3}
      \arrow{u}[swap]{F\pi_{i_3}}
    \end{tikzcd}
  \]
  By the essential uniqueness (\itemref{P-to-triangle}{i:tri-uniq}),
  there exist an $i_4 \in \D$ and a $\D$-morphism $m\colon i_3\to i_4$ making $p_3$ and $p_3'$ equal:
  \[
    p_4 := \big(
    \begin{tikzcd}[column sep=15mm]
      P
      \arrow[shift left=2]{r}{p_3}
      \arrow[shift right=2]{r}[swap]{p_3'}
      & FX_{i_3}
      \arrow{r}{FDm}
      & FX_{i_4}
    \end{tikzcd}
    \big)
  \]
  Put $t_4 := (P,p,i_4,p_4)$ and use \autoref{L:mor} to extend the $\D$-morphisms $h_1$ and
  $h_2$ to the $\E$-morphisms $\id_P + h_i\colon t_i\to t_4$, $i = 1,2$.
  Using that the $k_t$ form a cocone, we can then verify the desired independence property:
  \[
    \begin{tikzcd}[column sep = 40,baseline=(bot.base)]
      & P+X_{i_1}
      \arrow[bend left=20]{dr}{k_{t_1}}
      \arrow{d}[swap]{\id+h_1}
      \\
      P
      \ar{r}{\inl}
      \arrow[bend left=20]{ur}{\inl}
      \arrow[bend right=20]{dr}[swap]{\inl}
      & P+X_{i_4}
      \arrow{r}{k_{t_4}}
      & K
      \\
      & |[alias=bot]|P+X_{i_2}
      \arrow{u}{\id+h_2}
      \arrow[bend right=20]{ur}[swap]{k_{t_2}}
    \end{tikzcd}
    \tag*{\qedhere}
  \]
\end{proof}

The independence following from \autoref{cocone-independent} allows us
to reduce cocones of $(V\circ E)$ to those of $\C_\fp/FA$. For the
latter we can then use the universal property of the colimit $FA$.
\begin{lemma}[\agdaref{Iterate.Colimit}{E-Cocone-to-D}{Here, $D$ refers to the
  canonical diagram $\UFA\colon \C_\fp/FA\to \C$}]\label{L:kbar}
  For every $(V\circ E)$-cocone $(k_t\colon VEt\to K)_{t\in \E_0}$, there is a
  $\UFA$-cocone $(\bar k_{(P,p)}\colon P\to K)_{(P,p)}$
  such that $\bar k_{(P,p)} = k_t \circ \inl$ for some $t= (P,p,i,p') \in \E_0$.
\end{lemma}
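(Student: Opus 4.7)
The plan is to define $\bar k_{(P,p)}$ by choosing an extension of $(P,p)$ to an element of $\E_0$ and setting $\bar k_{(P,p)} := k_t\circ \inl$, then to verify independence of this choice and the cocone equations. For each $(P,p)\in \C_\fp/FA$, \itemref{P-to-triangle}{i:tri-ex} yields some $t = (P,p,i,p')\in \E_0$, and we define $\bar k_{(P,p)} := k_t\circ \inl$. Independence of the choice of $i$ and $p'$ is exactly the content of \autoref{cocone-independent}: any two such extensions $t_1,t_2$ satisfy $k_{t_1}\circ \inl = k_{t_2}\circ \inl$.

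The main work is checking the cocone equation $\bar k_{(Q,q)}\circ m = \bar k_{(P,p)}$ for every morphism $m\colon (P,p) \to (Q,q)$ in $\C_\fp/FA$, i.e., every $m\colon P\to Q$ in $\C$ with $q\circ m = p$. To this end, pick any realization $t_Q = (Q,q,j,q')\in \E_0$ of $\bar k_{(Q,q)}$, and form the auxiliary object $t_P' := (P,p,j,\, q'\circ m)$, which belongs to $\E_0$ because $F\pi_j\circ (q'\circ m) = q\circ m = p$. By independence, $\bar k_{(P,p)} = k_{t_P'}\circ \inl$, so it suffices to show $k_{t_Q}\circ \inl \circ m = k_{t_P'}\circ \inl$.

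The key step is constructing an $\E$-morphism $t_P' \to t_Q$ bridging the two extensions. I claim that the map $m + \id_{X_j}\colon P+X_j \to Q+X_j$ qualifies. It is a coalgebra morphism because
\[
  \gen{t_Q}\circ (m+\id_{X_j}) = F\inr\circ [q',x_j]\circ (m+\id_{X_j}) = F\inr\circ [q'\circ m, x_j]
\]
equals $F(m+\id_{X_j})\circ \gen{t_P'} = F((m+\id_{X_j})\circ \inr)\circ [q'\circ m, x_j] = F\inr\circ [q'\circ m, x_j]$. Moreover the triangle $\inj_{t_Q}\circ (m+\id_{X_j}) = \inj_{t_P'}$ holds because
\[
  [q,\alpha\circ\pi_j]\circ (m+\id_{X_j}) = [q\circ m,\, \alpha\circ\pi_j] = [p,\,\alpha\circ\pi_j].
\]

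Finally, applying the cocone property of $(k_t)_{t\in \E_0}$ to this $\E$-morphism gives $k_{t_Q}\circ (m+\id_{X_j}) = k_{t_P'}$, and precomposing with $\inl\colon P \to P+X_j$ yields $k_{t_Q}\circ \inl\circ m = k_{t_P'}\circ \inl$, which by independence equals $\bar k_{(P,p)}$. The main obstacle is spotting and verifying the auxiliary construction $t_P'$ and the bridging morphism $m+\id_{X_j}$; once these are in place, everything else is a routine chase through the definitions. In the Agda formalization an additional subtlety is that the choice made in defining $\bar k_{(P,p)}$ must be propagated consistently, which is precisely why \autoref{cocone-independent} was stated in the strong form allowing different indices $i_1,i_2$.
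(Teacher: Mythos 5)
Your proposal is correct and follows essentially the same route as the paper's proof: the same definition of $\bar k_{(P,p)}$ via \itemref{P-to-triangle}{i:tri-ex} with independence from \autoref{cocone-independent}, the same auxiliary extension $(P,p,j,q'\circ m)$ of the source, and the same bridging $\E$-morphism $m+\id_{X_j}$ (the paper's $g+\id_{X_i}$) whose coalgebra-morphism and $\inj$-triangle properties you verify identically. Nothing is missing.
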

\begin{proof}%
  Given $(P,p)$, we define $\bar k_{(P,p)} := k_t\circ \inl$ for the
  $t\in \E_0$ obtained from \itemref{P-to-triangle}{i:tri-ex}. By
  \autoref{cocone-independent}, the morphism $\bar k_{(P,p)}$ is
  independent of the choice of $i$ and $p'$ in $t = (P,p,i,p')$.

  We now prove that every morphism $g\colon (P,p)\to (Q,q)$ in $\C_\fp/FA$ can be extended to an $\E$-morphism as follows: for every extension $t_2 = (Q, q, i, q') \in \E_0$ of $(Q,q)$ we have the morphism $g + \id_{X_i}$ in $\E$ from the extension $t_1 = (P,p, i, q' \circ g)$ of $(P,p)$.
  Indeed, $g + \id_{X_i}$ is a coalgebra morphism from $E(t_1)$ to $E(t_2)$:
  \[
    \begin{tikzcd}[column sep = 0mm]
      E(t_1)
      \equiv\mathrlap{\Big(}
      &
      P + X_i
      \ar{r}{[q' \circ g,x_i]}
      \ar{d}[swap]{g + \id}
      \ar[shiftarr = {yshift=15}]{rr}{\gen{t_1}}
      &[35pt]
      FX_i
      \ar{d}{\id}
      \ar{r}{F\inr}
      &[30pt]
      F(P + X_i)
      \ar{d}{F(g+\id)}
      \mathrlap{\smash{~~\Big)}}
      \\
      E(t_1)
      \equiv\mathrlap{\Big(}
      &
      Q + X_i
      \ar{r}{[q',x_i]}
      \ar[shiftarr = {yshift=-15}]{rr}{\gen{t_2}}
      &
      FX_i
      \ar{r}{F\inr}
      &
      F(Q+X_i)
      \mathrlap{\smash{~~\Big)}}
    \end{tikzcd}
  \]
  Moreover, using that $q \circ g = p$ we obtain $\inj_{t_2} \circ (g + \id_{X_i}) = \inj_{t_1}$:
  \[
    \begin{tikzcd}[row sep = 5]
      P + X_i
      \ar{rd}{[p,\alpha\circ \pi_i] = \inj_{t_1}}
      \ar{dd}[swap]{g + \id}
      \\
      &
      FA
      \\
      Q + X_i
      \ar{ru}[swap]{[q, \alpha \circ \pi_i] = \inj_{t_2}}
    \end{tikzcd}
  \]
  The cocone coherence property of $(k_t)_{t\in \E_0}$ then yields that of the morphisms $\bar k_{(P,p)}$; indeed, the following diagram commutes:
  \[
    \begin{tikzcd}[row sep = 5, baseline = (B.base)]
      P
      \ar{r}{\inl}
      \ar{dd}[swap]{g}
      \arrow[rounded corners,to path = {
        --([yshift=35]\tikztostart |- \tikztotarget.center)
        -- ([yshift=35]\tikztotarget.center)\tikztonodes 
        -- (\tikztotarget)
      }
      ]{rrd}{\bar k_{(P,p)}}
      &
      P + X_i
      \ar{rd}{k_{t_1}}
      \ar{dd}{g + \id}
      \\
      &&
      K
      \\
      Q
      \ar{r}{\inl}
      \arrow[rounded corners,to path = {
        --([yshift=-35]\tikztostart |- \tikztotarget.center)
        -- ([yshift=-35]\tikztotarget.center)\tikztonodes 
        -- (\tikztotarget)
      }
      ]{rru}{\bar k_{(Q,q)}}
      &
      |[alias = B]|
      Q+X_i
      \ar{ru}[swap]{k_{t_2}}
    \end{tikzcd}
    \tag*{\qedhere}
  \]
\end{proof}

For the verification of the universal property of the tentative colimit, we also
translate the cocone morphisms back and forth:
\begin{lemma}\label{lift-cocone-morph}
  For every $(V\circ E)$-cocone $(k_t\colon VEt\to K)_{t\in
    \E_0}$, a $\C$-morphism $v\colon FA\to K$ is a %
  morphism of $\UFA$-cocones, that is,
  \[
    \bar k_{(P,p)} = \big(
    P \xra{p} FA \xra{v} K
    \big),
    \qquad
    \text{for every $p  \in \C_\fp/FA$},
  \]
  if and only if it is a morphism of $(V \circ E)$-cocones:
  \[
    k_t = \big(
    P+X_i \xra{\inj_t} FA \xra{v}
    K
    \big),
    \qquad
    \text{for every $t \in \E$}.
  \]
\end{lemma}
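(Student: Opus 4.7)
My plan is to prove the iff by decomposing each equation along the coproduct $P+X_i$, handling the two summands separately.

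The backward direction is almost immediate. Given $k_t = v\circ \inj_t$ for all $t\in\E_0$, pick for each $(P,p)\in\C_\fp/FA$ an extension $t=(P,p,i,p')\in\E_0$ using \itemref{P-to-triangle}{i:tri-ex}. The defining identity $\bar k_{(P,p)} = k_t\circ\inl$ from \autoref{L:kbar}, together with $\inj_t\circ\inl = p$, gives $\bar k_{(P,p)} = k_t\circ\inl = v\circ\inj_t\circ\inl = v\circ p$ as required.

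For the forward direction I would fix $t=(P,p,i,p')\in\E_0$ and, using that $\inj_t = [p,\alpha\circ\pi_i]$, reduce via the coproduct universal property to checking the two equations $k_t\circ\inl = v\circ p$ and $k_t\circ\inr = v\circ\alpha\circ\pi_i$. The first is the backward computation in reverse: by \autoref{L:kbar} (together with the independence in \autoref{cocone-independent}) and the hypothesis, $k_t\circ\inl = \bar k_{(P,p)} = v\circ p$.

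The second equation is the main obstacle, because the hypothesis concerns only objects in $\C_\fp$ whereas $X_i$ is only assumed presentable. I would handle this by invoking \autoref{presentable-split} to obtain $Q\in\C_\fp$, a split epi $e\colon Q\epito X_i$ and a splitting $s\colon X_i\to Q$ with $e\circ s = \id_{X_i}$. Since $\pi_i$ is a coalgebra morphism we have $F\pi_i\circ x_i\circ e = \alpha\circ\pi_i\circ e$, so $t_0 := (Q,\,\alpha\circ\pi_i\circ e,\,i,\,x_i\circ e)$ belongs to $\E_0$. A short direct calculation (using functoriality and the definitions of $\gen{t_0}$ and $\gen{t}$) shows that $h := [\inr\circ e,\,\inr]\colon Q+X_i \to P+X_i$ is a coalgebra morphism $E(t_0)\to E(t)$ satisfying $\inj_t\circ h = \inj_{t_0}$, hence a morphism in $\E$. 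The cocone law $k_t\circ h = k_{t_0}$ evaluated on $\inl$ yields $k_t\circ\inr\circ e = k_{t_0}\circ\inl$, and by \autoref{L:kbar} together with the hypothesis the right-hand side equals $\bar k_{(Q,\,\alpha\circ\pi_i\circ e)} = v\circ\alpha\circ\pi_i\circ e$. Precomposing with $s$ and using $e\circ s = \id_{X_i}$ cancels $e$ on both sides and delivers $k_t\circ\inr = v\circ\alpha\circ\pi_i$, completing the verification.
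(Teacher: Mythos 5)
Your proof is correct, and for the hard direction it takes a genuinely different route from the paper. Both proofs treat the easy direction identically, and both reduce the hard direction to the two coproduct components of $P+X_i$, with the $\inl$-component handled the same way via \autoref{L:kbar} and \autoref{cocone-independent}. The difference is in the $\inr$-component. The paper regards $\alpha\circ\pi_i\colon X_i\to FA$ itself as an object of $\C_\fp/FA$, forms $s=(X_i,\alpha\circ\pi_i,i,x_i)\in\E_0$, and shows that the codiagonal $\nabla\colon X_i+X_i\to X_i$ followed by $\inr$ is an $\E$-morphism $s\to t$; the cocone law then gives $k_t\circ\inr=k_s\circ\inl=\bar k_{(X_i,\alpha\circ\pi_i)}=v\circ\alpha\circ\pi_i$. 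You instead split $X_i$ through some $Q\in\C_\fp$ via \autoref{presentable-split} and use the $\E$-morphism $[\inr\circ e,\inr]\colon t_0\to t$; your verifications that this is a coalgebra morphism compatible with the $\inj$'s, and the final cancellation of the split epi $e$, all check out. What your detour buys is robustness: the paper's argument needs $(X_i,\alpha\circ\pi_i)$ to literally be an object of $\C_\fp/FA$, i.e.\ $X_i\in\C_\fp$, which holds in the intended instantiation ($D$ the inclusion of $\Coalg_{\finrec}$, whose carriers lie in $\C_\fp$) but is not forced by \autoref{ass:iterate}, where the $X_i$ are only required to be presentable. Your version works under that weaker hypothesis, at the modest cost of one extra object $t_0$ and the splitting; the paper's codiagonal trick is slightly slicker when $X_i\in\C_\fp$ is available.
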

\begin{proof}
  The implication for \textqt{if} is easy to verify (\agdarefcustom{\textqt{if}-direction of \autoref{lift-cocone-morph}}{Iterate.Colimit}{reflect-Cocone⇒}{}): we have
  \[
    v \circ p = v \circ \underbrace{[p,\alpha\circ \pi_i]}_{\inj_t} \circ \inl = k_t\circ \inl = \bar k_{(P,p)},
  \]
  using the definition of $\inj_t$ and $\bar k_{(P,p)}$ in the second and third steps, respectively.

  The argument for \textqt{only if} is non-trivial (\agdarefcustom{\textqt{only if}-direction of \autoref{lift-cocone-morph}}{Iterate.Colimit}{lift-Cocone⇒}{}).
  First recall from the proof of \autoref{allDiagRecursive} that $\inr\colon (X_i, x_i) \to (P+X_i, \gen{t})$ is a coalgebra morphism.
  Next we use that the object $\alpha \circ \pi_i\colon X_i \to FA$ of $\C_\fp/FA$ has the following factorization using that $\pi_i\colon (X_i,x_i) \to (A,\alpha)$ is a coalgebra morphism:
  \[
    \begin{tikzcd}
      X_i
      \ar{r}{\pi_i}
      \ar{rrd}[swap]{x_i}
      &
      A
      \ar{r}{\alpha}
      &
      FA
      \\
      &&
      FX_i
      \ar{u}[swap]{F\pi_i}
    \end{tikzcd}
  \]
  So we have the object $s = (X_i, \alpha\circ \pi_i, X_i, x_i)$ of $\E$ and we see that the codiagonal $\nabla\colon X_i + X_i \to X_i$ is a coalgebra morphism from $E(s)$ to $(X_i,x_i)$:
  \begin{equation}\label{diag:mor:2}
    \begin{tikzcd}[column sep = 30]
      X_i + X_i
      \ar{r}{[x_i,x_i]}
      \ar{d}[swap]{\nabla}
      \ar[shiftarr = {yshift=15}]{rr}{\gen{s}}
      &
      FX_i
      \ar{r}{F\inr}
      \ar{rd}{\id}
      &
      F(X_i+X_i)
      \ar{d}{F\nabla}
      \\
      X_i
      \ar{rr}{x_i}
      &&
      FX_i
    \end{tikzcd}
  \end{equation}
  Composing the coalgebra morphism $\inr\colon (X_i,x_i) \to (P+X_i, \gen{t})$ with the one in~\eqref{diag:mor:2} we obtain a morphism in $\E$ from $s$ to $t$;
  indeed, the composition is a coalgebra morphism $E(s) \to
  E(t)$, and we have $\inj_t \circ \inr \circ \nabla = \inj_s$:
  \[
    \begin{tikzcd}
      X_i + X_i
      \ar{d}[swap]{\nabla}
      \ar{rd}{[\alpha \circ \pi_i,\alpha \circ \pi_i]}
      \\
      X_i
      \ar{d}[swap]{\inr}
      \ar{r}{\alpha \circ \pi_i}
      &
      FA
      \\
      P + X_i
      \ar{ru}[swap]{[p,\alpha \circ \pi_i]}
    \end{tikzcd}
  \]
  So we have an $\E$-morphism $\inr\circ \nabla\colon s\to t$ and
  conclude that
  \begin{equation}\label{eq:ks}
    k_s = k_t \circ \inr \circ \nabla.
  \end{equation}
  We are ready to show the desired equation $k_t = v \circ \inj_t$. We
  consider the coproduct components of the domain $P + X_i$
  separately. For the left-hand component we have
  \begin{align*}
    v \circ \inj_t \circ \inl
    &=
    v \circ [p,\alpha \circ \pi_i] \circ \inl
    &
    \text{def.~of $\inj_t$}
    \\
    & =
    v \circ p
    &
    \text{since $[x,y]\circ \inl = x$}
    \\
    & = \bar k_{(P,p)}
    &
    \text{by assumption}
    \\
    & =
    k_t \circ \inl
    &
    \text{def.~of $\bar k_{(P,p)}$ (\autoref{L:kbar}).}
  \end{align*}
  For the right-hand coproduct component we compute as follows: 
  \begin{align*}
    v \circ \inj_t \circ \inr
    &=
    \mathrlap{v \circ [p,\alpha \circ \pi_i] \circ \inr
    = v \circ (\alpha \circ \pi_i)}
    &
    \\
    & =
    \bar k_{(X_i,x_i)}
    & \text{by assump.~for $p = \alpha\circ \pi_i$}
    \\
    & =
    k_s \circ \inl
    &
    \text{def.~of $\bar k_{(X_i,x_i)}$ (\autoref{L:kbar})}
    \\
    & =
    k_t \circ \inr \circ \nabla \circ \inl
    &
    \text{by~\eqref{eq:ks}}
    \\
    & =
    k_t \circ \inr
    &
    \text{since $\nabla \circ \inl = \id$}.
    \tag*{\qedhere}
  \end{align*}
\end{proof}

Finally, using that the canonical cocone
\[
  p\colon P \to FA
  \qquad ((P,p) \in \C_\fp/FA)
\]
is a colimit, we obtain that the $(V\circ E)$-cocone
\[
  \inj_t\colon VEt\to FA \qquad (t \in \E)
\]
is a colimit, too. Thus, under our \autoref{ass:iterate} we have the following result:
\begin{theorem}[%
  \agdaref{Iterate}{iterate-CoalgColimit}{The entire proof spreads over the following modules
  of \texttt{Iterate}:
  Assumptions, Colimit,  DiagramScheme,  FiniteSubcoalgebra,  ProofGlobals
}]\label{iterate-locally-finrec}
The coalgebra $(FA, F\alpha)$ is locally finrec. 
\end{theorem}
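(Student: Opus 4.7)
The plan is to use the diagram $E \colon \E \to \Coalg$ from \autoref{D:E} together with the cocone $\inj_t \colon E(t) \to (FA, F\alpha)$ as the witness that $(FA, F\alpha)$ is locally finrec. Two things then need checking: that each $E(t)$ is finrec, and that the underlying cocone of the $\inj_t$ is a colimit in $\C$.

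For the first, note that $E(t)$ has carrier $P + X_i$, which exists by \autoref{ass:main} (both $P, X_i \in \C_\fp$) and is presentable by \autoref{coproductPresentable}. Recursiveness has already been established in \autoref{allDiagRecursive}. Hence every $E(t)$ is finrec.

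For the colimit verification, I would use the canonical colimit $\UFA \colon \C_\fp/FA \to \C$ (available via \itemref{D:Fil-acc}{i:canColimit}) as the bridge. Given an arbitrary $(V \circ E)$-cocone $(k_t \colon VEt \to K)_{t \in \E_0}$, \autoref{L:kbar} turns it into a $\UFA$-cocone $(\bar k_{(P,p)} \colon P \to K)$. The universal property of $FA$ as the colimit of $\UFA$ yields a unique $v \colon FA \to K$ with $v \circ p = \bar k_{(P,p)}$ for every $(P,p) \in \C_\fp/FA$. By the ``only if'' direction of \autoref{lift-cocone-morph}, this $v$ satisfies $k_t = v \circ \inj_t$ for every $t \in \E_0$, so it is a mediating morphism for the $(V \circ E)$-cocone. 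Uniqueness follows from the ``if'' direction of the same lemma: any other mediating $v'$ is also a morphism of $\UFA$-cocones, and then uniqueness of $v$ at the level of $\UFA$ forces $v' = v$.

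The main obstacle in the overall argument has already been overcome in \autoref{cocone-independent} and \autoref{lift-cocone-morph}, which together translate cocones and cocone morphisms back and forth between the large, combinatorially awkward scheme $\E$ and the canonical scheme $\C_\fp/FA$ whose colimit is known by hypothesis. With those two lemmas in hand, the present theorem is a direct assembly: the diagram $E$ gives finrec coalgebras by \autoref{allDiagRecursive} and \autoref{coproductPresentable}, and the colimit condition is exactly what the translation lemmas deliver.
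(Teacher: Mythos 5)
Your proposal is correct and follows essentially the same route as the paper: the paper's proof likewise verifies that $(\inj_t)_{t\in\E_0}$ is a colimit of $V\circ E$ by passing through \autoref{L:kbar} to get a $\UFA$-cocone, invoking the universal property of $FA$, and using \autoref{lift-cocone-morph} in both directions for existence and uniqueness of the mediating morphism. Your explicit check that each $E(t)$ is finrec (via \autoref{coproductPresentable} and \autoref{allDiagRecursive}) is also exactly how the paper sets this up in the surrounding text.
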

\begin{proof}
  We shall prove that the cocone $\inj_t\colon P+ X_i \to FA$ ($t \in \E$) is a colimit of $V \circ E$.
  Given a cocone $k_t\colon P+X_i \to K$ ($t \in \E$) we obtain the cocone $\bar k_{(P,p)}\colon P \to K$ ($(P,p) \in \C_\fp/FA$) using \autoref{L:kbar}.
  Thus, there exists a unique morphism $v\colon FA \to K$ such that $\bar k_{(P,p)} = v \circ p$ for every $p\colon P \to FA$ in $\C_\fp/FA$.
  By \autoref{lift-cocone-morph}, we have, equivalently, that $k_t = v \circ \inj_t$ for every $t \in \E$.
  Hence, $v$ is the unique morphism with this property, which completes the proof.
\end{proof}

\takeout{}%

\subsection{Unique Colimit Injections}
For the uniqueness condition in Lambek's lemma (\itemref{lambek}{lambek:uniq})
and the universal property in general, it helps to investigate when the
colimit injections of locally finrec coalgebras are unique as
coalgebra morphisms.

\takeout{}%

We continue to work under \autoref{ass:iterate}. 

\begin{lemma}[\agdaref{Unique-Proj}{hom-to-coalg-colim-triangle}{}]
  \label{coalg-hom-triangle}
  For every coalgebra $(B,\beta)$ with presentable carrier $B$,
  every coalgebra morphism $h\colon (B,\beta) \to (A,\alpha)$ factorizes through
  one of the colimit injections $\pi_j\colon (X_j,x_j)\to (A,\alpha)$ in $\Coalg$:
  \[
    \begin{tikzcd}
      (B,\beta)
      \arrow{r}{h}
      \arrow[dashed]{dr}[swap]{h'}
      & (A,\alpha)
      \\
      & (X_j,x_j)\arrow{u}[swap]{\pi_j}
    \end{tikzcd}
  \]
\end{lemma}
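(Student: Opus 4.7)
The plan is to first find a lifting on the level of $\C$ using presentability of $B$, and then to ``tweak'' it along a connecting morphism in $\D$ to make it into a genuine coalgebra morphism.

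First I would unpack the situation: since $V\colon \Coalg \to \C$ creates the colimit (\autoref{createCoalgColim}), the cocone $V\pi_i\colon X_i \to A$ ($i \in \D$) is a colimit of $V \circ D$ in $\C$. Since $\D \in \Fil$ (by \autoref{ass:iterate}) and $B$ is presentable, \autoref{R:fpob}\ref{R:fpob:1} applied to $h\colon B \to A$ yields some $i \in \D$ and a $\C$-morphism $h_0 \colon B \to X_i$ with $\pi_i \circ h_0 = h$. In general $h_0$ will not be a coalgebra morphism, so the main task is to correct it.

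The next step is to compare the two candidate morphisms $x_i \circ h_0$ and $Fh_0 \circ \beta$ from $B$ to $FX_i$ after postcomposing with $F\pi_i$. Using that $\pi_i$ and $h$ are coalgebra morphisms and $\pi_i \circ h_0 = h$, we get
\[
  F\pi_i \circ x_i \circ h_0 = \alpha \circ \pi_i \circ h_0 = \alpha \circ h = Fh \circ \beta = F\pi_i \circ Fh_0 \circ \beta.
\]
Now I invoke \autoref{ass:iterate}: $F$ preserves the colimit of $V \circ D$, so $F\pi_i\colon FX_i \to FA$ ($i \in \D$) is itself a filtered colimit in $\C$ indexed by $\D \in \Fil$. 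Applying \autoref{R:fpob}\ref{R:fpob:2} with the presentable object $B$ to the two factorizations $x_i \circ h_0$ and $Fh_0 \circ \beta$ of the common morphism $\alpha \circ h\colon B \to FA$, there is a $\D$-morphism $m\colon i \to j$ such that
\[
  FDm \circ x_i \circ h_0 = FDm \circ Fh_0 \circ \beta.
\]

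Finally, I define $h' := Dm \circ h_0\colon B \to X_j$ and verify the two required properties. For the coalgebra morphism condition, using that $Dm$ is a coalgebra morphism from $(X_i, x_i)$ to $(X_j, x_j)$:
\[
  x_j \circ h' = x_j \circ Dm \circ h_0 = FDm \circ x_i \circ h_0 = FDm \circ Fh_0 \circ \beta = Fh' \circ \beta.
\]
The factorization $\pi_j \circ h' = h$ follows from the cocone property $\pi_j \circ Dm = \pi_i$ together with $\pi_i \circ h_0 = h$.

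The main obstacle is the second step: one must recognize that the hypothesis \textqt{$F$ preserves the colimit of $V \circ D$} in \autoref{ass:iterate} is precisely what makes the essential-uniqueness clause of \autoref{R:fpob} available for $FA$, letting us upgrade the mere $\C$-lift $h_0$ into a coalgebra morphism by postcomposing with a suitable $Dm$. All other steps are diagram-chases once this is in place.
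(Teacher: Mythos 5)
Your proposal is correct and follows essentially the same route as the paper's proof: first lift $h$ through some $\pi_i$ in $\C$ using presentability of $B$, then use that $F$ preserves the colimit of $V\circ D$ together with the essential-uniqueness clause of \autoref{R:fpob} applied to the two factorizations of $\alpha\circ h$ to find a connecting morphism $m\colon i\to j$ that corrects the lift into a coalgebra morphism $h' = Dm\circ h_0$. The verification steps also match the paper's.
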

\begin{proof}%
  The hom-functor $\C(B,-)\colon \C\to \Set$ preserves the colimit $A$ of
  $V\circ D$; here, we use that we have a colimit in the base category
  $\C$ (\itemref{deffinrec}{i:carrier-colim}). Thus, we
  obtain an $i \in \D$ and a $\C$-morphism~$p'$ such that the
  following triangle commutes in $\C$ (cf.~\autoref{R:fpob}):
  \[
    \begin{tikzcd}
      B
      \arrow{r}{h}
      \arrow[dashed]{dr}[swap]{p'}
      & A
      \\
      &
      X_i%
      \arrow{u}[swap]{\pi_i}
    \end{tikzcd}
  \]
  Proving that $p'$ is a coalgebra morphism amounts to 
  showing that the left-hand square of the following diagram commutes:
  \[
    \begin{tikzcd}
      B
      \arrow{d}[swap]{\beta}
      \arrow{r}{p'}
      \descto{dr}{?}
      \arrow[shiftarr={yshift=5mm}]{rr}{h}
      & X_i
      \arrow{d}[swap]{x_i}
      \arrow{r}{\pi_i}
      \descto{dr}{\(\circlearrowleft\)}
      & A
      \arrow{d}{\alpha}
      \\
      FB
      \arrow{r}{Fp'}
      & FX_i
      \arrow{r}{F\pi_i}
      & FA
    \end{tikzcd}
  \]
  We will not prove its commutativity. Instead, observe that $F$
  preserves the colimit of $V\circ D$ (by assumption), so the
  morphisms $F\pi_i\colon FX_i \to FA$ ($i \in \D$) form a
  colimit. Since $\D$ lies in $\Fil$, it is filtered. We now use that
  $B$ is presentable and the ensuing essential uniqueness of
  factorizations of the morphism $\alpha\circ h\colon B\to FA$ through
  the colimit injection $F\pi_i$ (\itemref{R:fpob}{R:fpob:2}). The two
  paths of the left-hand square above are two such
  factorizations. Hence, there exists some morphism $d\colon i\to j$
  in $\D$ such that
  $FDd\circ x_i\circ p' = FDd \circ Fp'\circ \beta$:
  \[
    \begin{tikzcd}
      B
      \arrow{rr}{\alpha\circ h}
      \arrow[shift left=0]{dr}[sloped,above]{x_i\circ p'}
      \arrow[shift right=2]{dr}[sloped,below]{Fp'\circ \beta}
      & & FA
      \\
      & FX_i
      \arrow{ur}{F\pi_i}
      \arrow[dashed]{r}[below]{FDd}
      & FX_j
      \arrow{u}[swap]{F\pi_j}
    \end{tikzcd}
  \]
  We verify that $h' := Dd\circ p'\colon B\to X_j$ is the
  desired coalgebra morphism $(B,\beta)\to (X_j,x_j)$:
  \[
    \begin{tikzcd}
      B
      \ar{d}[swap]{\beta}
      \ar{r}{p'}
      \ar[shiftarr = {yshift=15}]{rr}{h'}
      &
      X_i
      \ar{r}{Dd}
      \ar{d}[swap]{x_i}
      &
      X_j
      \ar{d}{x_j}
      \\
      FB
      \ar{r}{Fp'}
      \ar[shiftarr = {yshift = -15}]{rr}{Fh'}
      &
      FX_i
      \ar{r}{FDd}
      &
      FX_j
    \end{tikzcd}
  \]
  Indeed, the right-hand square commutes, and the left-hand one does when postcomposed with $FDd$; thus the outside commutes as desired. Moreover, we have
  \[
    \pi_j \circ h' = \pi_j \circ Dd \circ p' = \pi_i \circ p' = h.
    \tag*{\qedhere}
  \]
\end{proof}
\begin{theorem}[\agdaref{Unique-Proj}{unique-proj}{The colimit injections are
called \emph{projections} in the Formalization because this is the terminology
in the \texttt{agda-categories} library}]\label{unique-proj}%
  Given a locally finrec coalgebra $(A,\alpha)$ and $i\in \D$, the colimit injection
  $\pi_i$ is the unique coalgebra morphism from $(X_i,x_i)$ to
  $(A,\alpha)$ provided that $D\colon \D\to \Coalg$ is full.
\end{theorem}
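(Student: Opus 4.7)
The plan is as follows. Given any coalgebra morphism $h\colon (X_i, x_i) \to (A, \alpha)$, I want to show $h = \pi_i$. The strategy is to factor $h$ through some colimit injection using \autoref{coalg-hom-triangle}, then identify the factor as the $D$-image of a $\D$-morphism via fullness of $D$, and finally conclude via the cocone coherence of the $\pi$'s.

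First, since $Di = (X_i, x_i)$ is finrec, its carrier $X_i$ is presentable. Thus \autoref{coalg-hom-triangle} applies to $h\colon (X_i, x_i) \to (A, \alpha)$ and yields some $j \in \D$ together with a coalgebra morphism $h'\colon (X_i, x_i) \to (X_j, x_j)$ satisfying $\pi_j \circ h' = h$.

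Second, the assumption that $D$ is full provides a $\D$-morphism $f\colon i \to j$ with $Df = h'$. The cocone coherence $\pi_j \circ Df = \pi_i$ then gives
\[
  h = \pi_j \circ h' = \pi_j \circ Df = \pi_i,
\]
as desired.

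The main work is already packed into \autoref{coalg-hom-triangle} (which in turn uses presentability of $X_i$, the assumption that $F$ preserves the relevant colimit of $V \circ D$, and filteredness of $\D \in \Fil$). The remaining step is almost formal: fullness of $D$ turns the abstract coalgebra morphism $h'$ into a $\D$-datum, after which cocone coherence finishes the job. I do not expect a serious obstacle; the only minor point is to ensure that all the running hypotheses of \autoref{ass:iterate} are indeed in force for the given $(A,\alpha)$, which the preceding text of this section already guarantees.
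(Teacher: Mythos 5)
Your proposal is correct and follows essentially the same route as the paper's proof: factor $h$ through some $\pi_j$ via \autoref{coalg-hom-triangle}, use fullness of $D$ to realize the factor as $Dd$ for a $\D$-morphism $d\colon i\to j$, and conclude by cocone coherence. Your explicit remark that presentability of $X_i$ (from finrecness of $Di$) is what licenses the application of \autoref{coalg-hom-triangle} is a point the paper leaves implicit.
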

\begin{proof}%
  Given a coalgebra morphism $h\colon (X_i,x_i) \to (A,\alpha)$, apply
  \autoref{coalg-hom-triangle} to obtain a factorization through some colimit
  injection:
  \[
    \begin{tikzcd}
    & (A,\alpha)
    \\
    (X_i, x_i)
    \arrow{ur}{h}
    \arrow[dashed]{rr}{h'}
    &&
    (X_j,x_j) \arrow{ul}[swap]{\pi_j}
    \end{tikzcd}
  \]
  Since $D$ is full, $h' = Dd$ for some morphism $d\colon i\to j$ of~$\D$.
  Hence, $h = \pi_j\circ Dd = \pi_i$ by the cocone coherence condition.
\end{proof}

\subsection{Colimit of All Finrec Coalgebras}

We now move on to consider the colimit of all finrec coalgebras and establish that this satisfies the two conditions in \autoref{lambek}, which imply that it is the initial algebra.

\begin{notation}
  We denote by $\Coalg_{\finrec}$ the full subcategory of $\Coalg$ consisting of all finrec coalgebras with a carrier in $\C_\fp$.
\end{notation}
In the classical setting, this category is small:
The reason is that~$\C_\fp$ is a set, and on each object of $\C_\fp$ there is only a set of coalgebra structures.
So the colimit of the forgetful functor 
\[
\begin{tikzcd}
  \Coalg_{\finrec}
  \arrow[hook]{r}{D}
  & \Coalg
  \arrow[hook]{r}{V}
  & \C
\end{tikzcd}
\]
exists whenever $\C$ is cocomplete.
\begin{assumption}[\agdaref{Construction}{TerminalRecursive}{The assumptions are turned into module parameters. Instead of the essentially small $\Coalg_{\mathsf{rec}}^\mathsf{fin}$,
    the Agda code considers the colimit of those recursive coalgebras whose carrier lies in $\C_\fp$.}]
  For the remainder of this section we assume that $\Coalg_{\finrec}$ lies in $\Fil$, that the colimit of $V\circ D$ exists and is preserved by $F$.
\end{assumption}
\noindent
We denote $\colim(V \circ D)$ by $A$ and note that it carries a canonical coalgebra structure $\alpha\colon A \to FA$ such that $(A,\alpha) := \colim D$ (by \autoref{createCoalgColim}).
Moreover, the coalgebra $(A,\alpha)$ is locally finrec by definition.

From the uniqueness result in \autoref{unique-proj} we deduce the following universal property:
\begin{proposition}[\agdaref{Construction}{universal-property}{The statements about the constructed recursive coalgebra in the diagram are proven in retract-coalgebra-* in Coalgebra.Recursive}] \label{finrec-ump}
  For every finrec coalgebra $(C,c)$, there is a unique coalgebra morphism
  $(C,c)\to (A,\alpha)$.
\end{proposition}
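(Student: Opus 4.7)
The plan is to reduce the proposition for a general finrec coalgebra $(C,c)$ to the special case where the carrier lies in $\C_\fp$. In that case $(C,c)$ is already an object of the diagram $\Coalg_{\finrec}$, and since $D\colon \Coalg_{\finrec}\hookrightarrow \Coalg$ is a full inclusion, \autoref{unique-proj} immediately produces $\pi_{(C,c)}$ as the unique coalgebra morphism $(C,c)\to(A,\alpha)$. Everything else is about bootstrapping a general presentable $C$ back to this situation.

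For general presentable $C$, I would invoke \autoref{presentable-split} to pick some $P\in\C_\fp$, a split epimorphism $e\colon P\epito C$, and a splitting $s\colon C\to P$ with $e\circ s=\id_C$. Equipping $P$ with the coalgebra structure $p:=Fs\circ c\circ e$, both $e$ and $s$ become coalgebra morphisms: $Fe\circ p = F(e\circ s)\circ c\circ e = c\circ e$, and $p\circ s = Fs\circ c\circ e\circ s = Fs\circ c$. Hence $(C,c)$ is a retract of $(P,p)$ inside $\Coalg$.

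The main technical step is to show that $(P,p)$ is recursive, which then makes it finrec and places it in the diagram $\Coalg_{\finrec}$. Given an algebra $(B,b)$, I would let $k\colon C\to B$ be the unique coalgebra-to-algebra morphism from $(C,c)$ and set $h := k\circ e$; a short computation using $p = Fs\circ c\circ e$ and $e\circ s=\id_C$ gives $b\circ Fh\circ p = k\circ e = h$, proving existence. For uniqueness, I would show that any $h'\colon P\to B$ satisfying $h' = b\circ Fh'\circ p$ is determined by its restriction $h'\circ s$: namely, $h'\circ s$ is itself a coalgebra-to-algebra morphism $(C,c)\to(B,b)$, whence $h'\circ s = k$ by uniqueness, and then $h' = b\circ Fh'\circ p = b\circ F(h'\circ s)\circ c\circ e = b\circ Fk\circ c\circ e = k\circ e$.

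Having $(P,p)\in\Coalg_{\finrec}$, its colimit injection $\pi_{(P,p)}\colon(P,p)\to(A,\alpha)$ is available. Existence of the desired morphism is then witnessed by $\pi_{(P,p)}\circ s\colon (C,c)\to(A,\alpha)$, which is a coalgebra morphism because $s$ is and $\pi_{(P,p)}$ is. For uniqueness, given two coalgebra morphisms $h_1,h_2\colon(C,c)\to(A,\alpha)$, both $h_i\circ e$ are coalgebra morphisms $(P,p)\to(A,\alpha)$; \autoref{unique-proj} (applicable because $D$ is full) forces $h_1\circ e = \pi_{(P,p)} = h_2\circ e$, and since $e$ is a split epimorphism hence an epimorphism we conclude $h_1 = h_2$. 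The hardest piece is the recursiveness of $(P,p)$: retracts do not in general inherit universal properties from their retractions, and the argument succeeds only because the coalgebra-to-algebra equation for $(P,p)$ itself lets one recover any candidate morphism $h'$ from its composite $h'\circ s$ along the section.
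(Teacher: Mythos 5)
Your proof is correct and follows essentially the same route as the paper: split $C$ off some $P\in\C_\fp$ via \autoref{presentable-split}, transport the coalgebra structure to $P$ as $p=Fm\circ c\circ e$, show $(P,p)$ is finrec so that \autoref{unique-proj} applies to its colimit injection, and transfer the unique morphism back along the retraction. The only difference is that you establish recursiveness of $(P,p)$ by a direct computation (which is sound, and amounts to unfolding the relevant lemma in this instance), whereas the paper gets it in one step by applying \autoref{sandwich-recursive} to $m\colon (C,c)\to(P,p)$ and $c\circ e\colon (P,p)\to(FC,Fc)$.
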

\begin{proof}%
  By \autoref{presentable-split}, every presentable object $C\in \C$ is a split quotient of some object $P$ in $\C_\fp$ via $e\colon P \epito C$, say. Choose $m\colon C \monoto P$ such that $e\circ m = \id_C$.
  Then the following coalgebra structure 
  \[
    p= \big(
    P \xra{e}
    C \xra{c}
    FC \xra{Fm}
    FP\big)
  \]
  turns $e$ and $m$ into a coalgebra morphism; indeed, the diagram below commutes
  \[
    \begin{tikzcd}[column sep = 20]
      P
      \ar{r}{e}
      \ar{d}[swap]{e}
      &
      C
      \ar{r}{c}
      \ar{ld}{\id}
      &
      FC
      \ar{r}{Fm}
      \ar{rd}[swap]{\id}
      &
      FP
      \ar{d}{Fe}
      \\
      C
      \ar{d}[swap]{m}
      \ar{rrr}{c}
      \ar{rd}{\id}
      &&&
      FC
      \ar{d}{Fm}
      \ar{ld}[swap]{\id}
      \\
      P
      \ar{r}{e}
      &
      C
      \ar{r}{c}
      &
      FC
      \ar{r}{Fm}
      &
      FP
    \end{tikzcd}
  \]
  Thus, we have the coalgebra morphism $c \circ e\colon (P,p) \to (FC,Fc)$.
  By \autoref{sandwich-recursive} applied to $h = m$ and $g = c \circ e$ (and noting that $p = Fh \circ g$) we see that $(P,p)$ is recursive.
  Thus, this coalgebra lies in $\Coalg_{\finrec}$ since $P$ lies in
  $\C_\fp$.  The diagram
  $ D\colon \Coalg_{\finrec} \hookrightarrow \Coalg $ is a full
  functor. By \autoref{unique-proj}, the colimit injection
  $\pi\colon (P,p) \to (A,\alpha)$ is the unique coalgebra
  morphism. Using that $e \circ m = \id_C$, we see that there is a
  unique coalgebra morphism from $(C, c)$ to $(A,\alpha)$: we have the
  coalgebra morphism
  \[
    (C,c) \xra{m} (P,p) \xra{\pi} (A,\alpha),
  \]
  and given any coalgebra morphism $h\colon (C,c) \to (A,\alpha)$, we have $h \circ e = \pi$ by the unicity of $\pi$ whence $h = h \circ e \circ m = \pi \circ m$.
\end{proof}
\noindent
This universal property also lifts to colimits of finrec coalgebras:
\begin{corollary}[\agdaref{Construction}{universal-property-locally-finrec}{}]\label{locally-finrec-ump}%
  For every locally finrec coalgebra $(C,c)$, there is a unique coalgebra
  morphism $(C,c)\to (A,\alpha)$.
\end{corollary}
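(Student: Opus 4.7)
The plan is to reduce the locally finrec case to the finrec case, which is already handled by Proposition \ref{finrec-ump}, by exploiting that a locally finrec coalgebra is presented as the colimit (in $\Coalg$) of finrec coalgebras, and that colimits in $\Coalg$ are created by the forgetful functor $V$ (Lemma \ref{createCoalgColim}). Concretely, since $(C,c)$ is locally finrec, we are given a diagram $D' \colon \D' \to \Coalg$ of finrec coalgebras with a cocone $(\iota_i \colon D'i \to (C,c))_{i\in \D'}$ whose underlying $\C$-cocone is colimiting; so $(C,c)$ itself is the colimit of $D'$ in $\Coalg$.

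For existence, I would first apply Proposition \ref{finrec-ump} to each $D'i$, obtaining a unique coalgebra morphism $h_i \colon D'i \to (A,\alpha)$. The family $(h_i)_{i \in \D'}$ is a cocone on $D'$: for any $\D'$-morphism $f \colon i \to j$, both $h_j \circ D'f$ and $h_i$ are coalgebra morphisms $D'i \to (A,\alpha)$, and $D'i$ is finrec, so they coincide by the uniqueness half of Proposition \ref{finrec-ump}. Now the colimit property of $(C,c)$ yields a unique coalgebra morphism $h \colon (C,c) \to (A,\alpha)$ satisfying $h \circ \iota_i = h_i$ for all $i \in \D'$.

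For uniqueness, suppose $h, h' \colon (C,c) \to (A,\alpha)$ are two coalgebra morphisms. Precomposing with an injection $\iota_i$ yields two coalgebra morphisms $D'i \to (A,\alpha)$; since $D'i$ is finrec, Proposition \ref{finrec-ump} forces $h \circ \iota_i = h' \circ \iota_i$ for every $i \in \D'$. The universal property of the colimit $(C,c)$ then gives $h = h'$.

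I expect no serious obstacle here: the bulk of the work has already been done in Proposition \ref{finrec-ump} and Lemma \ref{createCoalgColim}. The only subtlety is conceptual — keeping straight the distinction between the ambient colimit of $V \circ D$ used to define $(A,\alpha)$ (indexed by $\Coalg_{\finrec}$) and the colimit diagram $D'$ witnessing that $(C,c)$ is locally finrec — but both invocations of Proposition \ref{finrec-ump} are legitimate because they only require that the domain coalgebra be finrec, independently of how $(A,\alpha)$ was constructed.
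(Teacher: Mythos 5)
Your proposal is correct and follows essentially the same route as the paper: apply Proposition~\ref{finrec-ump} to each finrec coalgebra in the witnessing diagram, use its uniqueness clause to see that the resulting morphisms form a cocone (and hence induce a morphism out of the colimit), and derive uniqueness from the fact that the colimit injections are jointly epic. Your explicit appeal to Lemma~\ref{createCoalgColim} to justify that $(C,c)$ is a colimit in $\Coalg$ is a slightly more careful rendering of a step the paper leaves implicit.
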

\begin{proof}
  \autoref{finrec-ump} lifts from finrec coalgebras to locally finrec coalgebras
  by a general property (\agdarefcustom{Colimit property in the proof of \autoref{locally-finrec-ump}}{Colimit-Lemmas}{colimit-unique-rep}{}) of colimits.
  If $D\colon \D\to \Coalg$ is the witnessing diagram $(C,c) = \colim D$ that consists of finrec coalgebras $Di$ ($i\in \D$),
  then there is a unique $Di\to (A,\alpha)$ for every $i\in \D$. Thus,
  $(A,\alpha)$ forms a cocone for $D$, which induces some morphism $(C,c)\to (A,\alpha)$.
  For uniqueness, consider $f,g\colon (C,c)\to (A,\alpha)$. For all $i\in
  \D$, we have $f\circ \inj_i = g\circ \inj_i\colon Di\to (A,\alpha)$, again by
  above uniqueness. Since colimit injections are jointly epic, this entails $f=g$.
\end{proof}
\begin{corollary} \label{thmTerminalLocallyFinrec}
  The coalgebra $(A,\alpha)$ is the terminal locally finrec coalgebra.
\end{corollary}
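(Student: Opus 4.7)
The plan is to observe that this corollary is essentially an immediate consequence of two facts we already have in hand, so the proof amounts to packaging them as a terminality statement. First, I would note that $(A,\alpha)$ is itself locally finrec: by construction it is the colimit in $\C$ of $V \circ D \colon \Coalg_{\finrec} \hookrightarrow \Coalg \to \C$, and by \autoref{createCoalgColim} the forgetful functor $V$ creates colimits so that $(A,\alpha) = \colim D$ is a colimit of finrec coalgebras in $\Coalg$, which is exactly the definition of locally finrec (\itemref{deffinrec}{i:localfinrec}). Hence $(A,\alpha)$ is an object of the full subcategory of $\Coalg$ spanned by locally finrec coalgebras.

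Second, I would invoke \autoref{locally-finrec-ump} directly: for every locally finrec coalgebra $(C,c)$ there is a unique coalgebra morphism $(C,c) \to (A,\alpha)$. Combined with the previous observation, this is precisely the statement that $(A,\alpha)$ is terminal among locally finrec coalgebras.

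Since both ingredients are already established, there is no real obstacle — the only thing to be mildly careful about is the bookkeeping: making sure that the category in which we are claiming terminality (the full subcategory of $\Coalg$ on locally finrec coalgebras) is exactly the one whose objects are covered by \autoref{locally-finrec-ump} and whose morphisms are plain $F$-coalgebra morphisms. Both sides agree by definition, so the proof is essentially a one-liner citing \autoref{createCoalgColim} to place $(A,\alpha)$ in the subcategory and \autoref{locally-finrec-ump} to supply the unique map.
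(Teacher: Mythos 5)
Your proposal is correct and matches the paper's (implicit) argument exactly: the paper notes just before the corollary that $(A,\alpha) = \colim D$ is locally finrec by definition via \autoref{createCoalgColim}, and then the corollary follows immediately from \autoref{locally-finrec-ump}. Nothing further is needed.
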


This allows us to prove our main theorem.
\begin{theorem}[\agdaref{Construction}{initial-algebra}{}]\label{main:thm}
  The coalgebra structure $\alpha\colon A\to FA$ is an isomorphism, and $\alpha^{-1}\colon FA\to A$ is the initial $F$-algebra.
\end{theorem}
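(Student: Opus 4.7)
The plan is to assemble the theorem from the pieces already established, following the roadmap in \autoref{fig:roadmap}. The two hypotheses of Lambek's Lemma (\autoref{lambek}) applied to the coalgebra $(A,\alpha)$ are exactly what we need, so the work reduces to verifying them.

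First I would verify condition~\ref{lambek:ex}, the existence of a coalgebra morphism $h\colon (FA,F\alpha) \to (A,\alpha)$. By \autoref{iterate-locally-finrec}, the coalgebra $(FA,F\alpha)$ is itself locally finrec. Then \autoref{locally-finrec-ump} (equivalently, the terminality asserted in \autoref{thmTerminalLocallyFinrec}) immediately produces such a morphism $h$. Next, for condition~\ref{lambek:uniq}, I apply \autoref{locally-finrec-ump} again, this time to the locally finrec coalgebra $(A,\alpha)$ itself: there is a unique coalgebra endomorphism on $(A,\alpha)$, which must therefore be $\id_A$. Applying \autoref{lambek} then yields that $\alpha\colon A\to FA$ is an isomorphism.

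It remains to deduce that $\alpha^{-1}\colon FA\to A$ is the initial $F$-algebra. By \autoref{iso:recursive:initial}, it suffices to show that $(A,\alpha)$ is a recursive coalgebra. But $(A,\alpha) = \colim D$, where $D\colon \Coalg_{\finrec}\hookrightarrow \Coalg$ is a diagram of recursive coalgebras, and by \autoref{createCoalgColim} this colimit is created by the forgetful functor $V\colon \Coalg\to\C$. Hence \autoref{recursiveColimit} applies and $(A,\alpha)$ is recursive, which finishes the proof.

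I do not expect any serious obstacle here: all the heavy lifting has already been done in \autoref{iterate-locally-finrec} (showing $(FA,F\alpha)$ is locally finrec) and in \autoref{finrec-ump} together with its corollary \autoref{locally-finrec-ump} (yielding the universal property). The only thing to be mildly careful about is that the uniqueness part of \autoref{locally-finrec-ump} is what furnishes both the endomorphism uniqueness of $(A,\alpha)$ and, independently, the morphism~$h$ from $(FA,F\alpha)$; after that Lambek's argument is purely formal and \autoref{iso:recursive:initial} closes the case.
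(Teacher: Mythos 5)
Your proposal is correct and follows essentially the same route as the paper: \autoref{iterate-locally-finrec} plus \autoref{locally-finrec-ump} give the two hypotheses of \autoref{lambek}, and \autoref{iso:recursive:initial} finishes. The only difference is that you explicitly justify why $(A,\alpha)$ is recursive (via \autoref{createCoalgColim} and \autoref{recursiveColimit}) before invoking \autoref{iso:recursive:initial}, a step the paper leaves implicit; making it explicit is a small but welcome improvement.
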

\begin{proof}
  Applying $F$ to $(A,\alpha)$ yields a locally finrec coalgebra
  $(FA,F\alpha)$ (\autoref{iterate-locally-finrec}).
  By \autoref{locally-finrec-ump}, we obtain  a (unique) coalgebra morphism $(FA, F\alpha) \to (A,\alpha)$.
  By another application of \autoref{locally-finrec-ump}, we see that identity is the only coalgebra morphism on $(A,\alpha)$. 

  Thus, $\alpha$ is an isomorphism by Lambek's lemma (\autoref{lambek}), and by \autoref{iso:recursive:initial}, its inverse is the structure of the initial $F$-algebra.
\end{proof}
\begin{theorem}\label{cor:main}
  For every accessible endofunctor on a locally presentable category,
  the initial algebra is the colimit of all recursive coalgebras with a
  $\lambda$-presentable carrier.
\end{theorem}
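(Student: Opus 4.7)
The plan is to obtain \autoref{cor:main} as a direct instantiation of \autoref{main:thm} to the $\Fil$-accessible setting arising from a locally $\lambda$-presentable category. First, given an accessible endofunctor $F$ on a locally presentable category $\C$, I would fix a regular cardinal $\lambda$ such that $\C$ is locally $\lambda$-presentable and, simultaneously, $F$ is $\lambda$-accessible; any two such cardinals can be replaced by a common upper bound, so this choice is harmless. Then, invoking \autoref{E:llp}, I would take $\Fil$ to be the class of all small $\lambda$-filtered categories, with $\C_\fp$ being a representative set of $\lambda$-presentable objects, so that $\C$ becomes $\Fil$-accessible.

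Next I would verify the hypotheses needed to apply \autoref{main:thm}. \autoref{ass:main} is satisfied because $\C$ is cocomplete (in particular has binary coproducts of $\lambda$-presentable objects, which are themselves $\lambda$-presentable by \autoref{coproductPresentable}). The key structural assumption to check is that $\Coalg_{\finrec}$ lies in $\Fil$, i.e.\ is (essentially) small and $\lambda$-filtered. Essential smallness follows because there is only a set of $\lambda$-presentable carriers (up to isomorphism) and on each such carrier only a set of coalgebra structures. $\lambda$-filteredness follows from the fact that finrec coalgebras are closed under $\lambda$-small colimits in $\Coalg$ — indeed, such colimits are created by $V$ (\autoref{createCoalgColim}), the carrier of a $\lambda$-small colimit of $\lambda$-presentable objects is again $\lambda$-presentable, and recursiveness is preserved under colimits by \autoref{recursiveColimit}. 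Finally, $F$ preserves the colimit of $V\circ D$ because $\Coalg_{\finrec}$ is $\lambda$-filtered and $F$ is $\lambda$-accessible.

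With these hypotheses in place, \autoref{main:thm} gives that $\alpha\colon A\to FA$ is an isomorphism and $\alpha^{-1}\colon FA\to A$ is the initial algebra, where $A = \colim(V\circ D)$ for $D\colon \Coalg_{\finrec} \hookrightarrow \Coalg$. This is exactly the claim of \autoref{cor:main}, once one observes that a recursive coalgebra with $\lambda$-presentable carrier is the same thing as a finrec coalgebra in our chosen $\Fil$-accessible structure.

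The only subtle point I expect is the verification that $\Coalg_{\finrec}$ is $\lambda$-filtered, since this requires combining closure of $\lambda$-presentable objects under $\lambda$-small colimits in $\C$ with the closure of recursive coalgebras under colimits in $\Coalg$; the remaining steps are bookkeeping around the choice of $\lambda$ and the identification of the $\Fil$-accessible data associated to a locally $\lambda$-presentable category.
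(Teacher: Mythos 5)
Your proposal is correct and follows essentially the same route as the paper: instantiate \autoref{main:thm} by taking $\Fil$ to be the class of all (small) $\lambda$-filtered categories for a regular cardinal $\lambda$ witnessing both local presentability of $\C$ and accessibility of $F$, so that $\Coalg_{\finrec}$ becomes the essentially small category of recursive coalgebras with $\lambda$-presentable carrier whose colimit exists and is preserved by $F$. You in fact supply more detail than the paper's own (very terse) proof, notably the verification that $\Coalg_{\finrec}$ is $\lambda$-filtered via closure of $\lambda$-presentable objects under $\lambda$-small colimits together with \autoref{recursiveColimit}, which the paper leaves implicit.
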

\begin{proof}
  Suppose that $\C$ is locally $\lambda$-presentable and that
  $F\colon \C\to \C$ is $\lambda$-accessible. Let $\Fil$ be the class
  of all $\lambda$-filtered categories. Then $\Coalg_{\finrec}$ is an essentially small
  category consisting of all recursive coalgebra with a
  $\lambda$-presentable carrier and the colimit of
  \[
    \Coalg_{\finrec} \hookrightarrow \Coalg\overset{V}{\longrightarrow} \C
  \]
  exists and is preserved by $F$. Thus, $F$ has an initial algebra given by the colimit of the above diagram.
\end{proof}

\section{Comparison with the Initial-Algebra Chain}
\label{sec:chain}
The initial-algebra chain~\cite{freealgebras}, which we have recalled
for sets in the introduction, generalizes Kleene's fixed point
theorem: recall that the latter starts with the bottom element and then successively
applies a function to it. This yields an ascending chain approaching the
desired fixed point from below.

For the construction of the initial algebra for an endofunctor $F\colon \C\to \C$,
one starts with the initial object $0\in \C$. Its initiality induces a morphism
$!\colon 0\longrightarrow F0$.
Applying the functor successively to this morphism yields the $\omega$-chain
\[
  0\xra{!}
  F0\xra{F!}
  F^2 0\xra{F^2!} \cdots \xra{F^{k-1}!} F^k 0 \xra{F^k !}
  F^{k+1} 0 \xra{F^{k+1}!} \cdots
\]
Let us write $W_i$ for $F^i0$ and $w_{i,j}\colon W_i \to W_j$ for the
connecting morphisms. We also denote the colimit of this
$\omega$-chain by $W_\omega$ (assuming that it exists in $\C$).
\begin{enumerate}
\item By the universal property of this colimit there is a canonical morphism $W_\omega \to FW_\omega$.
  If the colimit is preserved by $F$ (e.g.~because~$F$ is finitary), then this morphism has an inverse which can be shown to be the structure of an initial $F$-algebra.
  
\item If the colimit is not preserved by $F$, then the iteration continues; the next step is induced by the universal property of the colimit~$W_\omega$:
  \[
    W_\omega
    \longrightarrow
    F W_\omega
    \longrightarrow
    FF W_\omega
    \longrightarrow
    \cdots
  \]
  The chain can be continued in this vein by transfinite recursion.
  If the functor $F$ is $\lambda$-\emph{accessible}~\cite[Def.~2.16]{adamek1994locally} for some regular cardinal $\lambda$, then the transfinite chain terminates in $\lambda$ steps; this means that $w_{\lambda,\lambda+1}\colon W_\lambda \to FW_\lambda$ is an isomorphism.
  In contrast, our construction does not use transfinite recursion and takes only one colimit, regardless of the size of $\lambda$.
\end{enumerate}

When looking at this chain through a coalgebraic lens, one observes that the chain consists of recursive coalgebras:
\begin{itemize}
\item $!\colon 0\to F0$ is trivially a recursive coalgebra by initiality.
\item Applying $F$ to this coalgebra yields recursive coalgebras $F^k !\colon F^k 0\to F(F^k 0)$, $k\in \N$ (\autoref{iterate-recursive}).
\item Their colimit $W_\omega \to FW_\omega$ is again recursive (\autoref{recursiveColimit}).
\end{itemize}

However, in general these recursive coalgebras are not contained in
the diagram scheme $\Coalg_{\finrec}$ that we use in our construction
in \autoref{sec:main}.  In every category, the initial object is
presentable, so $!\colon 0\to F0$ is a finrec coalgebra and thus a
split quotient of a finrec coalgebra in $\Coalg_{\finrec}$ by
\autoref{presentable-split}.  However, already the second step
$F0\to FF0$ of the initial-algebra chain may leave the realm of finrec
coalgebras, because $F0$ may not be presentable anymore.  Even for
simple set functors such as $FX = \N + X\times X$, the set
$F\emptyset = \N$ is infinite.

Of course, $F0\to FF0$ and, more generally, $F^k0 \to FF^k0$ for every $k \in \N$ are
\emph{locally} finrec; this follows from \autoref{iterate-locally-finrec}.

\section{Agda Formalization}\label{sec:agda}
The non-trivial details concerning the definition of the diagram $E\colon \E\to \Coalg$ (\autoref{D:E}) have motivated us to formalize the entire construction in a machine-checked setting.

\subsection{Technical Aspects}
After an attempt with Coq, we ultimately chose Agda (version 2.6.4) because it has an (almost official) library for category theory \cite{agda-categories} (version 0.2.0).
The formalized proofs are spread across 29 files and more than 5000 lines of code in total.
The entire source code and compilation instructions can be found
on
\begin{center}
\nicehref{\sourceRepoURL}{\sourceRepoURL}
\\
(also archived on \nicehref{\softwareHeritageURL}{archive.softwareheritage.org})
\end{center}
in the supplementary material archive.

All files compile with the flags
\texttt{\textendash{}\textendash{}without\textendash{}K} \texttt{\textendash{}\textendash{}safe}. For one file (\texttt{Iterate.Colimit}), we additionally use
\texttt{\textendash{}\textendash{}lossy\textendash{}unification} to adjust
Agda's unification heuristic and substantially speed up
compilation.\footnote{\url{https://agda.readthedocs.io/en/latest/language/lossy-unification.html}}
This does not compromise correctness.

\subsection{Formalization Challenges}
Agda's type system is organized in levels: if a structure (like a function or record) quantifies over all sets on level $\ell$, then the quantifying structure lives on level (at least) $\ell+1$.
This implies that if we consider coalgebras living on level $\ell$, then the property of being recursive lives on level $\ell+1$ because it quantifies over all algebras on level $\ell$.
Consequently, it is unclear whether a colimit of our main diagram $\Coalg_{\finrec}\hookrightarrow \Coalg$ exists, even when restricting to coalgebras over $\Set$.
However, assuming the law of excluded middle, we can bring $\Coalg_{\finrec}$ back down to level $\ell$ (\agdarefcustom{Level of $\Coalg_{\finrec}$}{Classical-Case}{IsRecursive-via-LEM}{}).
In order to keep the law of excluded middle out of the main construction, we allow potentially large colimits in $\Fil$-accessible categories and in the definition of locally finrec coalgebras.

Contrary to our original expectations, no issues regarding choice principles arose.
In our proofs, we have used multiple times that if a hom-functor preserves a colimit, then morphisms into the colimit factorize through the diagram (see e.g.~\autoref{sec:lfp} and \autoref{P-to-triangle}).
For the sake of modelling quotients, categories in the Agda library are not enriched over plain sets but instead over setoids.
The latter are sets with an explicit equivalence relation denoting element equality.
Thus, when working with elements of a colimit, in lieu of equivalence classes, one uses concrete representatives of equivalence classes.
Note that it was surprisingly tedious to prove that setoids forms a $\Fil$-accessible category (\agdarefcustom{Setoids are $\Fil$-accessible}{Setoids-Accessible}{Setoids-Accessible}{}).

\section{Conclusions and Future Work}\label{sec:conclusions}
We have shown that for a suitable endofunctor $F$ on a $\Fil$-accessible category, the initial algebra is obtained as the colimit of all recursive coalgebras with a presentable carrier. This formalizes the intuition that the initial algebra for $F$ is formed by all data objects of type $F$ modulo behavioural equivalence, which means that data objects are identified if they can be related by a coalgebra homomorphism.

Despite the fact that our description looks rather non-constructive, given that there is no concrete starting point, our main theorem can be proven in the constructive setting of Agda.

We leave as an open problem how well lfp categories can be formalized in a constructive setting with proper quotient types in lieu of setoids.
In addition, it would be interesting to see whether our construction can be adapted to well-founded coalgebras.

\begin{acks}
  We thank Sergey Goncharov for many helpful discussions on Agda and \texttt{agda-categories} throughout the entire formalization process.
  We also thank Lutz Schröder for interesting comments on the initial-algebra chain, Henning Urbat for inspiring discussions on fixed point theorems, and Nathanael Arkor for helpful comments on notions related to lfp categories.
  We have benefited a lot from helpful answers in the \texttt{\#agda} irc channel on \texttt{libera.chat}.
\end{acks}

\section{Index of formalized results}\label{agdarefsection}
Below we list the \textcolor{srcfilenamecolor}{Agda file} containing the referenced result and (if applicable) mention a concrete identifier in this file.
The respective HTML files can be found on
\begin{center}
\nicehref{\onlineHtmlURL index.html}{\onlineHtmlURL index.html}
\end{center}
and are also directly linked below.
\printcoqreferences

\bibliography{refs}

\end{document}